\documentclass[12pt,a4paper]{article}

\usepackage{mathtools}
\usepackage{amsmath}
\usepackage{bbm}
\usepackage{siunitx}
\usepackage{multirow}
\usepackage{graphicx}
\usepackage{amssymb}
\usepackage[a4paper, margin=2.5cm]{geometry}
\usepackage{natbib}
\usepackage{amsthm}
\usepackage{hyperref}
\usepackage{fancyhdr}
\usepackage{appendix}
\usepackage{eso-pic}
\usepackage{tikz}

\graphicspath{{Fig/}}

\theoremstyle{plain}
\newtheorem{theorem}{Theorem}[section]
\newtheorem{proposition}{Proposition}[section]
\theoremstyle{definition}
\newtheorem{example}{Example}[section]
\newtheorem{definition}{Definition}[section]
\theoremstyle{remark}

\newcommand{\hide}[1]{}

\AddToShipoutPictureFG{%
  \begin{tikzpicture}[remember picture, overlay]
    \node[rotate=45, scale=8, text=gray, opacity=0.5] at (current page.center) {\textbf{PREPRINT}};
  \end{tikzpicture}
}

\begin{document}

\title{A duration-augmented binary Markov chain for rainfall occurrence with long dry spells}

\author{%
  \normalsize
  Antoine Doizé \textsuperscript{1,*},
  Denis Allard\textsuperscript{2},
  Philippe Naveau\textsuperscript{3},
  Olivier Wintenberger\textsuperscript{4}
  \\[0.4em]
  \textsuperscript{1}LPSM, Sorbonne Université, Paris, France \\
  \textsuperscript{2}BioSP, INRAE, 84914 Avignon, France \\
  \textsuperscript{3}LSCE, CNRS-CEA-UVSQ, Gif sur Yvette, France \\
  \textsuperscript{4}LPSM, Sorbonne Université, Paris, France
 \\[0.2em]
  \textsuperscript{*}{\footnotesize\href{mailto:antoine.doize@sorbonne-universite.fr}{antoine.doize@sorbonne-universite.fr}}
} 

\date{\small Preprint --- May 7, 2026}

\maketitle

\begin{abstract}
Simulating realistic wet and dry spells is central in weather generators and climate-impact studies. While finite-order Markov chains are standard, they often fail to reproduce persistent dry conditions due to their inherent subexponential decay. We model rainfall occurrence by introducing a duration-augmented binary Markov chain. We establish a link with alternating renewal chains, enabling flexible parametric modelling of wet and dry spell duration distribution. We model those using two regime-adapted specifications from the general class of extended Generalized Pareto Distributions, yielding flexible tail behaviour across various climates. We use estimation methods adapted to each specification. Our model is applied to around 200 stations in the South of Europe spanning diverse Mediterranean and continental climates. We compare this framework to standard Markov models in characterising persistence and high-quantile extrapolation. The approach is generic, extending naturally to multi-state settings or other binary sequence applications in environmental statistics.

\medskip\noindent\textbf{Keywords:} Extreme dry spells, Rainfall occurrence, Stochastic generator
\end{abstract}

\section{Introduction}\label{sec:introduction}

Rainfall data underpin analyses across many fields, including ecosystem dynamics \citep{Friend1997_Hybrid_v3_EcolModel}, pollution dispersion for prevention \citep{Knisel1980_CREAMS_USDA}, urban hydrology, flash-flood insurance \citep{2019_leal}, and crop-yield sensitivity to rainfall variability \citep{2017_muneepeerakul_rainfall_intensity_frequency}. Among these applications, extreme rainfall events constitute primary risk drivers, in particular in the context of prolonged dry spells. Indeed, such events suppress crop yields \citep{2005_richter_wheat_drought}, diminish ecosystem resilience \citep{2003_fay_altered_rainfall_grassland}, and reduce river flow levels, with consequent adverse effects on wetland habitat integrity \citep{rolls2012mechanistic}, deterioration of water quality, and impairment of power plant cooling system efficiency. Despite this broad relevance, modelling the statistical distribution of rainfall occurrences and specifically the tail of the distribution, remains a fundamental challenge in hydrology and climate science, due to internal variability and complex multi-scale physical processes. This motivates the use of generative algorithms. In the statistical climatology literature \cite[see, e.g.][]{2015_ailliot_overview_weather_type_models}, these sampling techniques have been called 
stochastic weather generators. These methods are used to to generate plausible sequences and to enable extrapolation of extreme events that have not yet occurred but remain statistically plausible. The latter is usually performed by producing long simulations.

Numerous methodologies have been proposed for stochastic weather generators \citep{1999_wilks_weather_generation_game_review, 2017_olson_kleiber}. Yet a statistical challenge in modelling daily rainfall arises from the fundamental dichotomy between dry and wet days, leading to a discrete probability at zero (dry days) and continuous positive values (wet days). In this study, we focus on daily rainfall occurrences at a single site modelled by chain-dependent processes. For the sake of completeness, we briefly comment now on other methodologies. Alternative strategies include resampling approaches \citep{2003_harrold_nonparametric_model_daily_rainfall, wilby2003multi}, which are limited by their inability to generate values beyond recorded extremes, and censored continuous models, such as censored Gaussian formulations \citep{1995_hutchinson_stochastic_space_time_weather, allard2015disaggregating, 2015_baxevani_spatiotemporal_precipitation_generator}. The latter offer flexibility, allowing spatial extensions, Bayesian integration \citep{2018_benoit_stochastic_rainfall_subkilometer}, and covariate-dependent thresholds \citep{2002_qian_gaussian_censored_covariate_threshold}. However, as highlighted in \citet{2015_baxevani_spatiotemporal_precipitation_generator}, these models are not tuned to reproduce the longest dry spells (see right panels of Fig.~10 and Fig.~19 of the reference).

Chain-dependent processes are typically structured as a two-step methodology: the first step models rainfall occurrences, while the second models rainfall intensities conditional on rain occurrences. Early developments in this line of research employed two-state first-order Markov chains to represent rainfall occurrences \citep{1962_gabriel_markov_chain_daily_rainfall_telaviv}. Subsequent work extended this framework in several directions, including the incorporation of seasonality via Fourier analysis \citep{1981_richardson_stochastic_simulation_weather} or spatial extensions \citep{1991_zucchini_hidden_markov_space_time_precipitation}.
Despite these advances, a number of limitations have been consistently highlighted. First, Markov models tend to underestimate the variance of spell durations \citep{1999_wilks_interannual_variability_extremes_sdp}. More importantly, they systematically underestimate the frequency of extreme dry spells, a problem identified as early as \citet{1964_hopkins_statistics_daily_rainfall_occurrence} and repeatedly emphasised in subsequent studies \citep{1991_racsko_serial_local_stochastic_weather_models}. This shortcoming stems from the fact that Markov chains imply an exponentially decaying distribution for spell durations, which contrasts with empirical evidence from observed records when it comes to dry spells. To illustrate this limitation, we examine daily precipitation records from the European Climate Assessment \& Dataset (ECA\&D), whose data and preprocessing are detailed in Section~\ref{sec:application_europe_data}. The stations considered are displayed on the map in Fig.~\ref{fig:map_stations}. For readability, the station-focused figures throughout the article are shown only for Palermo (orange dot). The figures for the rest of the stations (black dots) are available in the provided code repository. The sequence of dry spell durations (properly defined in Section~\ref{sec:application_europe_data}) is extracted. Then, the survival function of dry spell durations in Palermo, which provides the probability of a dry spell duration to be greater than a given duration, is estimated and displayed in Fig.~\ref{fig:survival_func_sample_data_intro}, for every season. The curves which are displayed are the empirical survival function (orange), the geometric fit (red curve) which follows from a two-state first-order Markov chain model, and our suggested fitted model (black curve) which will be detailed in Section~\ref{subsec:spell_duration_distribution_specification}. The geometric survival function has an exponential tail, which may be adapted to some climates, for instance in Palermo-summer (top right panel). However, there are some stations and seasons for which this model fails to catch heavy-tailed dry spell distributions, which is striking here for Palermo-spring (top left panel) where the tail of the geometric survival declines at a markedly faster rate than its empirical counterpart. Thus this modelling significantly underestimates the probability of long dry spells, and the gap that widens further into the tail. Consequently, extrapolations from such a model will severely underestimate extreme dry events. The suggested survival function is much closer to the empirical fit in the tail of the distribution, which demonstrates its ability to model heavy-tailed survival functions for spell durations. Such pattern is common in spring (and can apply to autumn and winter for other stations): a transitional season, it exhibits a wide range of durations from short spells to extremely long ones.
\begin{figure}[!htbp]
    \centering
    \includegraphics[trim=0 0 0 0,clip,width=0.9\textwidth]{ 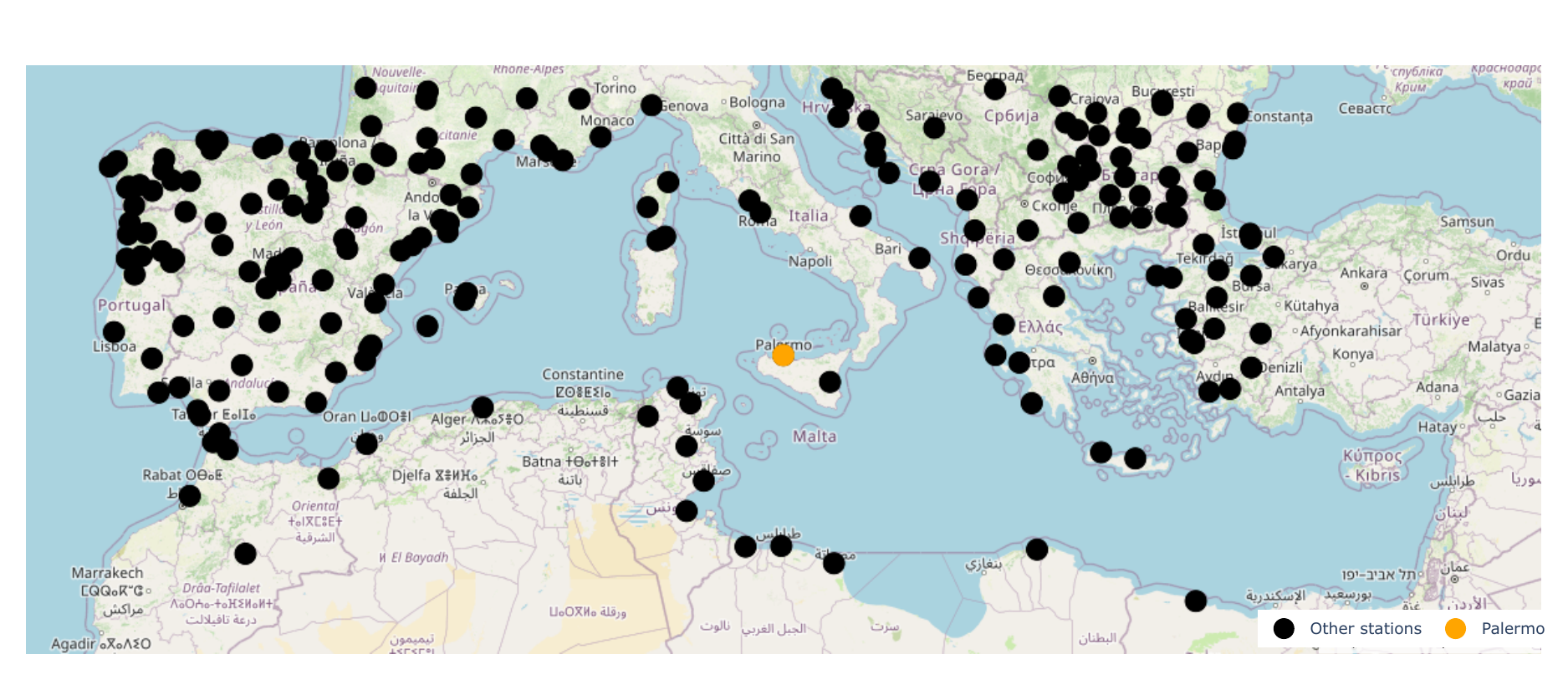}
    \caption{Map of the stations extracted from the ECA\&D. Rainfall data were extracted on the period 1945 - 2025 (data processing detailed in Section~\ref{sec:application_europe_data}). Throughout the article, station-specific figures are shown only for Palermo (orange dot) and analogous figures for the remaining stations (black dots) are available in the provided code repository.} \label{fig:map_stations}
\end{figure}
\begin{figure}[!htbp]
    \centering
    \includegraphics[width=0.75\linewidth]{ 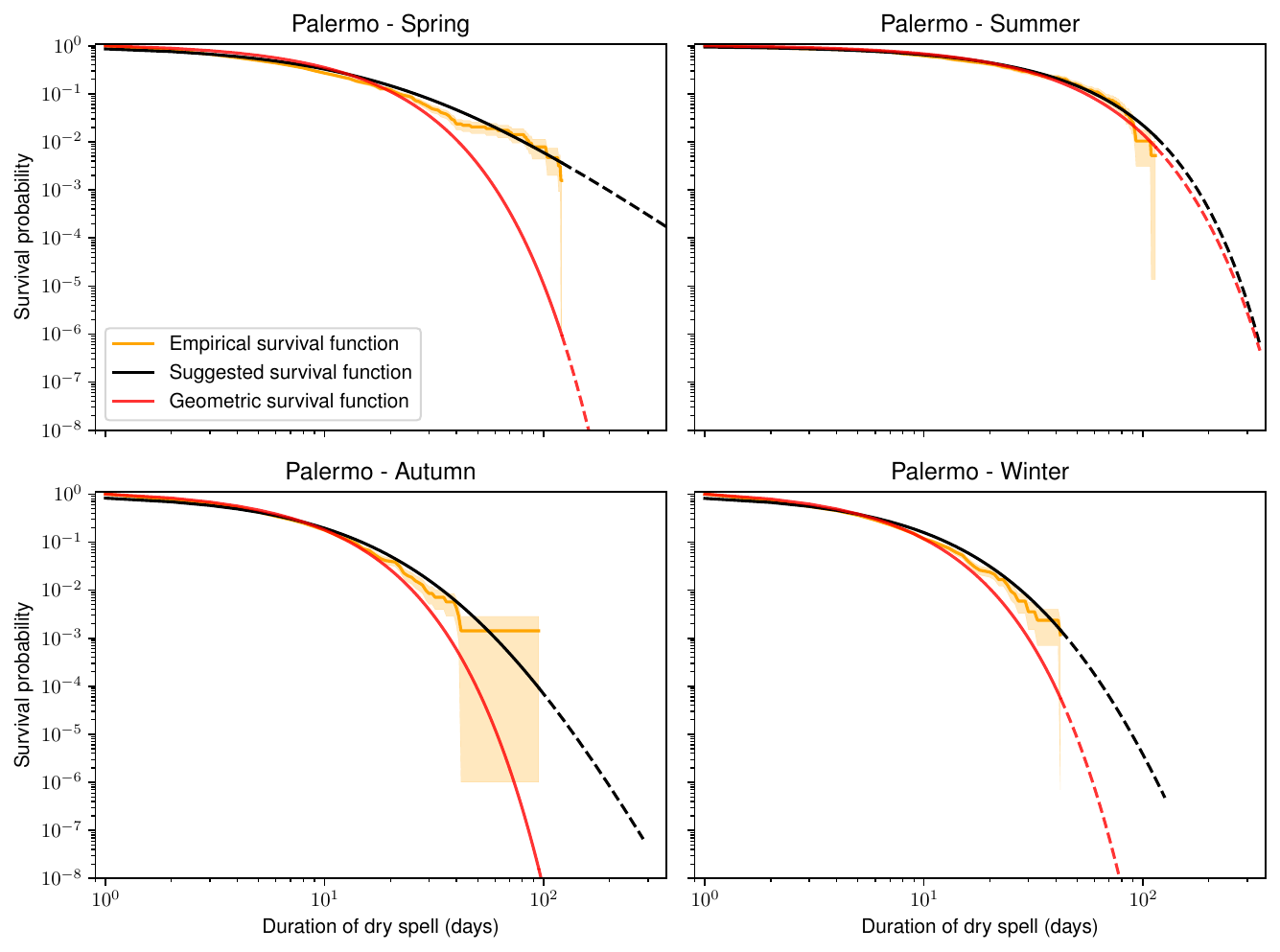}
    \caption{Survival function of dry spell durations in Palermo (log-log scale). Empirical estimation (orange curve) is compared to a geometric fit (red curve), and the suggested fit (black curve) detailed in Section~\ref{subsec:spell_duration_distribution_specification} and supported throughout this article.} \label{fig:survival_func_sample_data_intro}
\end{figure}
In response, numerous modelling strategies have been suggested to capture a broader class of distributions, including the use of higher-order Markov models \citep{1984_stern_model_fitting_daily_rainfall} and hidden Markov models with more than two hidden states. These developments laid the groundwork for modern weather generator frameworks \citep{2010_flecher_stochastic_weather_generator_skewed}, reviewed in \citet{2015_ailliot_overview_weather_type_models}. The underlying rationale is to represent distinct weather types, enabling for instance the differentiation between regular dry spells and severe dry spells \citep{2020_stoner_advanced_hmm_hourly_rainfall}.
Nonetheless, two key limitations can be identified. First, the introduction of complex models with hyperparameters, such as the number of hidden states, poses significant model selection challenges. The choice of model may vary depending on the selection criterion employed and may also differ across sites with distinct climatic regimes \citep{2008_schoof_order_markov_chain_daily_precipitation}, thereby complicating spatial generalisation. Second, difficulties persist in adequately representing extreme dry spell duration distributions. Even with higher-order Markov chains, the underlying model structure enforces geometrically distributed spell durations \citep{2008_lennartsson_precipitation_sweden_markov_composite}, which appears to  be too restrictive to describe the extreme behaviour of dry spell duration at different sites. These limitations have motivated the exploration of different approaches for modelling rainfall occurrence.

Within rainfall occurrence modelling, one option is to focus on modelling spell durations directly. This idea leverages alternating renewal processes from applied probability \citep{1987_asmussen_applied_probability_queues} and is often referred to as spell-length models in hydrometeorology \citep{1999_wilks_weather_generation_game_review}. These methods require independence between successive spell durations, a weaker assumption that for Markov models, particularly relevant for long dry spells. Their key advantage lies in the explicit representation of spell duration distributions, which avoids exponentially decaying tails. These approaches have been investigated since early studies, initially through attempts to discretize continuous alternating renewal processes into two-state \citep{1964_green_model_rainfall_occurrence} or three-state models \citep{1990_hutchinson_three_state_markov_point_rainfall}. However, such continuous-time representations quickly become mathematically complex, even when restricting to exponential durations.\hide{ Extensions include Markov renewal models \citep{1987_foufoula_markov_renewal_rainfall_occurrences} and Markov-Bernoulli processes \citep{1987_smith_markov_bernoulli_process}, thereby establishing links with point process theory.} In parallel, a different strand of research focuses on fitting discrete distributions, in particular the geometric distribution, to recorded spell durations. However, early studies demonstrated persistence in dry spells, with the probability of remaining in a dry spell increasing with spell duration \citep{1966_chatfield_wet_dry_spells}, a behaviour inconsistent with the memoryless property of the geometric distribution. To address this, more flexible distributions were employed, such as the log-series \citep{1970_green_generalized_probability_wet_dry}, gamma \citep{1973_quelennec_contribution_etude_probabiliste_pluies}, negative binomial \citep{1977_buishand_stochastic_modelling_daily_rainfall,1981_galloy_model_weather_cycles_rainfall_occurrence}, mixtures of geometric and log-series distributions \citep{2010_deni_probability_models_dry_wet_spells}. Subsequent extensions incorporated Fourier analysis to capture smooth seasonality \citep{1982_roldan_woolhiser_stochastic_daily_precipitation_occurrence,1991_racsko_serial_local_stochastic_weather_models}.

To our knowledge, the current state of research exhibits two blind spots. The first concerns the treatment of extreme spell durations. Although frequently identified as a critical challenge, neither Markov models nor alternating renewal models dealt specifically with this issue, and yet they are commonly used in climate impact studies \citep{2014_chen_review_stochastic_generation_precipitation} along with resampling methods \citep{2008_semenov_simulation_extreme_weather_swg}. Some studies have addressed extreme dry spells by fitting Generalized Pareto (GP) or Generalized Extreme Value (GEV) distributions \citep{2006_lana_long_dry_spells_iberian}, but these approaches focus exclusively on extremes, without capturing the full distribution or wet spells, and are thus unsuitable for daily rainfall occurrence simulation.
The second blind spot concerns the relationship between alternating renewal chains and a wide class of Markov chains, which is detailed in this work. While this connection has been exploited in \cite[Section~3.8]{resnick1992adventures}, it has not been formally established.

This article addresses both gaps within the chain-dependent process framework, focusing on the daily rainfall occurrence modelling. We introduce a binary Markov chain whose state space is augmented by the current spell duration, that is the number of consecutive days already spent in the ongoing dry or wet spell. The assumption underlying this approach is mild:  successive spell durations are assumed independent, a weaker requirement than the finite-order Markov property, especially for long spells.
Two methodological contributions follow. First, the augmented state space gives explicit control over the spell duration distribution: any parametric family for dry and wet spell lengths can be embedded while preserving a Markov structure for simulation. The extended Generalized Pareto distribution (eGPD) class, introduced by \citet{2016_naveau_modelling_jointly_rainfall_intensities} and overviewed in \citet{2026_Naveau}, is used to model the bulk and the lower and upper tails of the spell distribution jointly. This addresses the first gap. The second one is dealt with by establishing a mathematical equivalence between the model and an alternating renewal chain, building on recent results by \citet{2025_kozubowski_waiting_time_discrete}. This duality yields a goodness-of-fit procedure tailored to the features that are central to our construction, and provides asymptotic results which are useful for risk assessment of extreme dry spells. The framework is applied to the ECA\&D station network across southern Europe and benchmarked against a two-state first-order Markov chain baseline. A map highlights stations where our model predicts higher exposure to long dry spells than the baseline.

This article is organised as follows. Section~\ref{sec:def_bmcd} introduces the Binary Markov Chain with Duration model, establishes its main theoretical properties, and describes the methodology for parameter estimation and model validation. Section~\ref{sec:application_europe_data} applies the model to a set of stations across southern Europe, assesses how well it fits the data, and demonstrates the improvement it affords for extreme dry spell risk assessment. Section~\ref{sec:conclusion} summarises the main improvements, discusses the assumptions underlying the approach, and outlines directions for future work.

\section{Binary Markov Chain with Duration}\label{sec:def_bmcd}

Denote $\mathbb{N}^*=\mathbb{N}\setminus\{0\}$.
For $a\in\mathbb{R}$, we denote $a_+ := \max(a,0)$. We further denote
$\mathbf{a}=(a_i)_{i\in\mathbb{N}^*}$ a sequence of probabilities with $a_i\in[0,1]$ for all $i\in\mathbb{N}^*$, and for any integers $p_1,p_2 \ge 1$, $\;\mathbf{a}_{|p_1:p_2}=(a_i)_{p_1 \le i \le p_2}$ is the truncated vector of the elements $p_1$ to $p_2$. For a random variable $X$,  $\mathbb{E}[X]$ and $\mathrm{Var}(X)$ are its expectation and variance, respectively, and $F_X$, $\overline{F}_X=1-F_X$ are its cumulative distribution function (cdf) and survival function. When the distribution of $X$ is parametrised by a parameter $\theta$ belonging to some parameter space $\Theta$, we write $F_\theta$ and $\overline{F}_\theta$ in place of $F_X$ and $\overline{F}_X$, and denote by $\mathbb{P}_\theta$ and $\mathbb{E}_\theta$ the corresponding probability and expectation. When there is no ambiguity, we may suppress superscripts identifying the underlying random variable. Finally, we denote $\mathbbm{1}(\cdot)$ the indicator function, and $\lceil \cdot \rceil$ the ceiling function.

\begin{definition}[Binary Markov Chain with Duration]
\label{def:discrete_binary_markov_chain}
Let $\mathbf{q}^{(0)}$ and $\mathbf{q}^{(1)}$ be two sequences of probabilities. 
We call a random sequence $(R_n, D_n)_{n \geq 0}$ a \emph{Binary Markov Chain with Duration} (BMCD)  with \emph{exit probabilities} $(\mathbf{q}^{(0)}, \mathbf{q}^{(1)})$ if it is a Markov chain defined on 
 $\{0,1\} \times \mathbb{N}^*$ such that, 
 for every positive integer $n$, 
\[
(R_{n+1}, D_{n+1}) =
\begin{cases}
(1 - R_n, 1), & \text{with probability } q^{(R_n)}_{D_n}, \\[6pt]
(R_n, D_n + 1), & \text{with probability } 1 - q^{(R_n)}_{D_n}.
\end{cases}
\]
Without loss of generality, we set $(R_0, D_0) = (0,1)$.
\end{definition}
\noindent We also define the dry and wet spell durations as the hitting times
\begin{subequations}\label{eq:durations}
\begin{align}
\tau^{(r)}_k &\coloneqq \inf\Big\{\,n\ge \sum_{j=1}^{k-1}\big(\tau^{(0)}_j+\tau^{(1)}_j\big)
+ \mathbbm{1}_{\{r=1\}}\tau^{(0)}_k:\ (R_n,D_n)=(1-r,1)\,\Big\} \nonumber\\
&\quad -\Big(\sum_{j=1}^{k-1}\big(\tau^{(0)}_j+\tau^{(1)}_j\big)+ \mathbbm{1}_{\{r=1\}}\tau^{(0)}_k\Big),
\qquad r\in\{0,1\},\ k\ge1, \label{eq:durations_spell}
\shortintertext{and cycle durations as}
\tau_k &\coloneqq \tau^{(0)}_k+\tau^{(1)}_k,\qquad k\ge 1 \label{eq:durations_cycle}
\end{align}
\end{subequations}

\noindent with the convention that $\sum_{\varnothing}=0$.  In this work, we assume these durations to be finite almost surely (a.s.). We provide in Appendix ~\ref{subsec:spell_durations_finiteness} a sufficient condition on the sequences $(\mathbf{q}^{(0)}, \mathbf{q}^{(1)})$ for this assumption, along with its proof. Moreover, $(\tau^{(0)}_k)_{k\ge 1}$ and $(\tau^{(1)}_k)_{k\ge 1}$ are two sequences of independent and identically distributed (i.i.d.) random variables, and consequently $(\tau_k)_{k\ge 1}$ are i.i.d.. Details are also given in Appendix~\ref{subsec:spell_durations_iid}. From now on, we denote $\tau^{(0)}$, $\tau^{(1)}$, and $\tau$ the random variables distributed as each of these random sequences.

In definition \ref{def:discrete_binary_markov_chain}, the binary sequence  $\{R_0, R_1, \dots \}$  indicates either dry or wet status while the integer-valued  random sequence $\{D_0, D_1, \dots \}$ keeps track of the duration of each dry or wet spell. This is an extension of the model in \cite{1984_stern_model_fitting_daily_rainfall} referred to later as hybrid-order Markov chain \citep{1999_wilks_weather_generation_game_review} in which the value of $D_k$ was upper bounded to keep the number of parameters finite for estimation: see \cite[Table 1]{1999_wilks_interannual_variability_extremes_sdp}. However, this induced an upper bound to the order of memory of the Markov chain. BMCDs alleviate this limitation by providing a very general framework which encompasses for instance classical two-state first-order Markov chains (for any $r \in \{0,1\}$, set $q^{(r)}_d = q^{(r)}_1$ for all $d \geq 1$), two-state second-order Markov chains (for any $r \in \{0,1\}$, set $q^{(r)}_d = q^{(r)}_2$ for all $d \geq 2$), but also a wide range of other models which we characterize in the next paragraphs.

\subsection{BMCD and alternating renewal chains} \label{equivalence_mc_alternating_renewal}

For an introduction to alternating renewal chains we refer the reader to classical textbooks, for example \citet[Section~2.4]{semi_markov_barbu}. Here, an explicit equivalence between the BMCD representation and alternating renewal chains is established: we first show how the BMCD can be written as an alternating renewal chain in equation~\eqref{eq:alternating_renewal_generated_by_spell_durations}. Then we establish the reverse mapping in equation~\eqref{eq:kozubowski_relation_bmcd_from_alternating_chain}. To our knowledge this link has never been properly described in the literature, despite being used implicitly in \cite[Section~3.8]{resnick1992adventures}.

Switching from a BMCD representation to an alternating renewal chain can be described as follows. Using $(\tau^{(0)}_k)_{k\geq1},(\tau^{(1)}_k)_{k\geq1},(\tau_k)_{k\geq1}$ as defined in equation \eqref{eq:durations}, denote
\begin{equation}\label{eq:alternating_renewal_generated_by_spell_durations}
    T_0 \coloneqq 0,\qquad T_k \coloneqq \sum_{i=1}^k \tau_i,\quad k\ge1.
\end{equation}
By construction, $(T_k)_{k\ge0}$ is an \emph{alternating renewal chain} with down-times distributed as $\tau^{(0)}$ and up-times distributed as $\tau^{(1)}$, in the sense of \citet[definition~2.7]{semi_markov_barbu}.
\noindent In an alternating renewal chain, the renewal counting sequence
$(N_n)_{n\in\mathbb{N}}$, defined for $n\in\mathbb{N}$ by 
\begin{equation}\label{eq:renewal_counting_sequence}
N_n := \max\{k \ge 0 : T_k \le n\} = \sum_{k=0}^n \mathbbm{1}_{(R_k,D_k)=(0,1)} - 1,  
\end{equation}
counts the number of dry-wet cycles that have been completed until time $n$. The $-1$ term is added to avoid counting time $T_0=0$ as a renewal time and so that $N_0=0$.

Starting from an alternating renewal chain, an explicit BMCD representation can be recovered. Assume that there are two random variables, $\tau^{(0)}, \tau^{(1)}$, which generate an alternating renewal chain as described in equation \eqref{eq:alternating_renewal_generated_by_spell_durations}. Then a BMCD with spell duration distributions matching the distributions $\tau^{(r)}$ can be built by setting
\begin{equation} \label{eq:kozubowski_relation_bmcd_from_alternating_chain}
    q_d^{(r)} \;=\; 
\begin{cases}
\mathbb{P}(\tau^{(r)} = d \mid \tau^{(r)} \ge d), & \hbox{if} \ \mathbb{P}(\tau^{(r)} \ge d) > 0, \\
1, & \text{otherwise},
\end{cases},\quad \forall d \geq 1.
\end{equation}
This result can be recovered from a direct application of proposition~2.1 from \citet{2025_kozubowski_waiting_time_discrete} to the discrete random variable $\tau^{(0)}_1$. The very same argument applies to $\tau^{(1)}_1$ and the time-shifted Markov chain $(R_{n+\tau^{(0)}_1},D_{n+\tau^{(0)}_1})_{n \geq 0}$.

The equivalence between the BMCD and the alternating renewal representation is illustrated in Fig.~\ref{fig:renewal_process_formalism} using a 15-day synthetic rainfall-occurrence series. For the BMCD representation, the orange and blue bars show the value of the binary indicator $R_n$, and the double arrows indicate the spell durations.  From the renewal chain point of view, $(T_0,T_1 \dots)$ are the renewal times, and $(N_0,N_1 \dots)$ indicates how the renewal counting sequence increments by one at each $T_k$.
\begin{figure}[!htbp]
    \centering
    \includegraphics[width=0.75\linewidth]{ 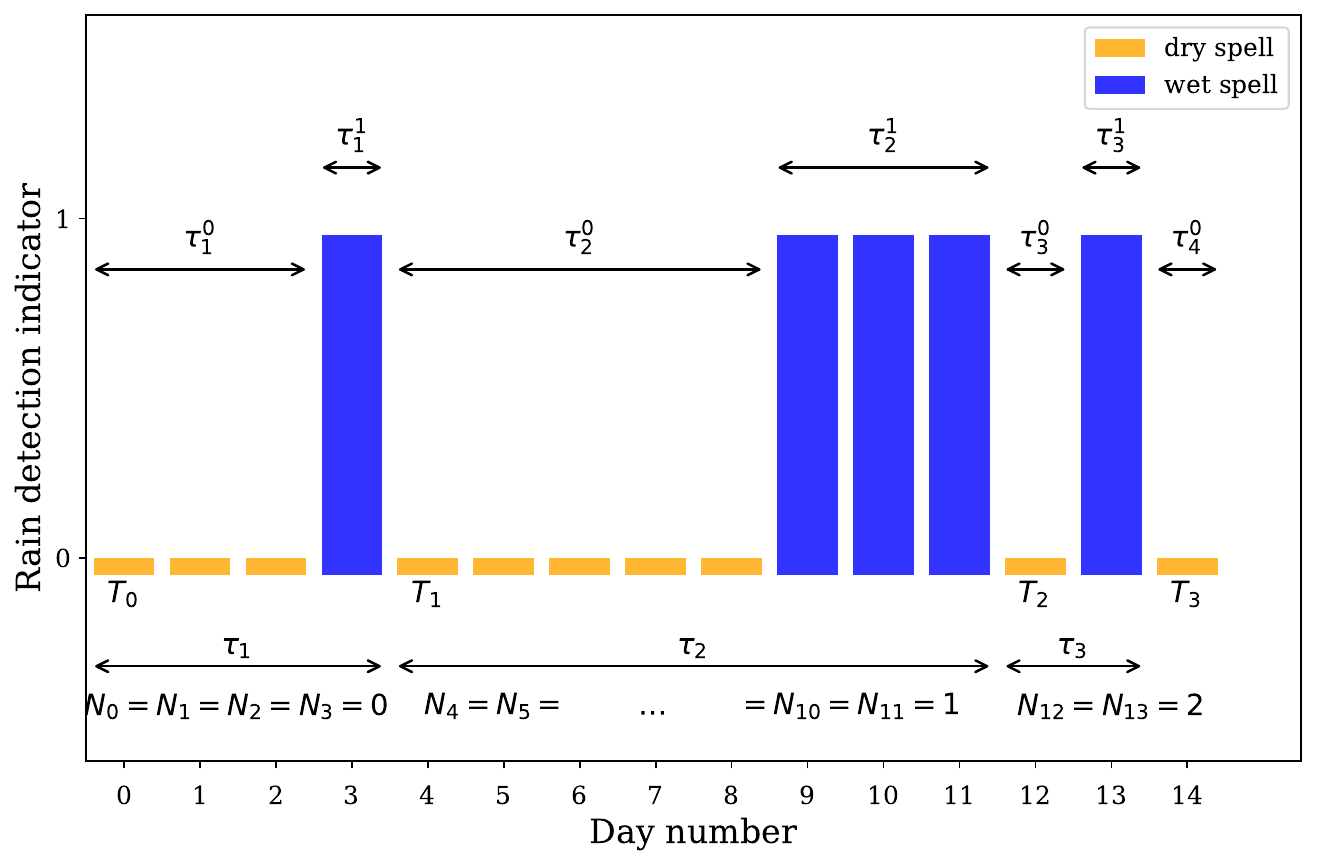}
    \caption{Synthetic rainfall-occurrence series. $R_n\in\{0,1\}$ (definition~\ref{def:discrete_binary_markov_chain}) indicates the dry (in orange) and wet (in blue) states. $(\tau_k^{(0)})_{k \in \mathbb{N}}$, $(\tau_k^{(1)})_{k \in \mathbb{N}}$, and $(\tau_k)_{k \in \mathbb{N}}$ (equations~\eqref{eq:durations_spell}, \eqref{eq:durations_cycle}) (double arrows) indicate spell and cycle duration. $(T_k)_{k \in \mathbb{N}}$, and $(N_n)_{n \in \mathbb{N}}$ (equations~\eqref{eq:alternating_renewal_generated_by_spell_durations}, \eqref{eq:renewal_counting_sequence}) indicate respectively the renewal times and the renewal counting sequence.}
    \label{fig:renewal_process_formalism}
\end{figure}

\subsection{Asymptotic properties} \label{subsec:asymptotic_properties}

Standard asymptotic properties from renewal theory are recalled in Appendix~\ref{subsec:preliminary_results}. Those results are used to prove the following propositions: an asymptotic property from renewal theory is adapted to the BMCD representation, along with an asymptotic normality extension. Then, a central limit theorem is derived.

\begin{proposition}\label{prop:reward_convergence}
Let $(R_n,D_n)_{n\in\mathbb{N}}$ be a BMCD. We suppose that the related cycle duration $\tau$ has a finite first-order moment. Consider a positive function $w:\{0,1\}\times\mathbb{N}^*\to\mathbb{R_+}$ and suppose $\mathbb{E}\!\big[\sum_{k=0}^{\tau-1} w(R_k,D_k)\big] \coloneqq \rho < +\infty$. Then,
\[
\lim_{n\to\infty}\frac{1}{n}\sum_{k=0}^{n} w(R_k,D_k)
= \frac{\rho}{\mathbb{E}[\tau]}
\quad\text{a.s.}
\]

\noindent If $\tau$ has finite second moment, i.e., if $\nu^2 \coloneqq\mathrm{Var} \big( \sum_{k=0}^{\tau-1} w(R_k, D_k) \big) <+\infty$, then:
$$
\frac{\sum_{k=0}^{n} w(R_k, D_k) -N_ n\rho}{\sqrt{n \nu^2 / \mathbb{E}[\tau]}} \;\;\xrightarrow[n \to \infty]{\;\;d\;\;}\;\; \mathcal{N}(0,1).
$$
\end{proposition}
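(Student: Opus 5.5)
We will use the renewal-reward theorem and the random-index CLT from Appendix~\ref{subsec:preliminary_results}, applied to the cycle structure of Section~\ref{equivalence_mc_alternating_renewal}.

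\textbf{Reward decomposition.} Define the cycle rewards
\[
W_j \coloneqq \sum_{k=T_{j-1}}^{T_j-1} w(R_k,D_k), \qquad j\ge 1 .
\]
At each renewal time $T_{j-1}$ the chain is in state $(0,1)$. By the strong Markov property, the excursions $\big((R_k,D_k)\big)_{T_{j-1}\le k<T_j}$ are i.i.d. copies of the first cycle (this is the argument of Appendix~\ref{subsec:spell_durations_iid}). Hence the pairs $(\tau_j,W_j)_{j\ge1}$ are i.i.d., with $\mathbb{E}[W_1]=\rho$. Since $w\ge0$, we can sandwich the partial sum between completed cycles:
\[
\sum_{j=1}^{N_n} W_j \;\le\; S_n\coloneqq\sum_{k=0}^n w(R_k,D_k) \;\le\; \sum_{j=1}^{N_n+1} W_j .
\]

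\textbf{Law of large numbers.} Since $\mathbb{E}[\tau]<\infty$ and $\tau\ge 2$, the strong law gives $T_k/k\to\mathbb{E}[\tau]$ a.s. The renewal theorem then gives $N_n\to\infty$ and $N_n/n\to 1/\mathbb{E}[\tau]$ a.s. The strong law applied to $W_j$ (with $\rho<\infty$) gives $\frac1m\sum_{j\le m}W_j\to\rho$ a.s. Dividing the sandwich by $n$, writing each bound as $\frac{N_n}{n}\cdot\frac{1}{N_n}\sum_j W_j$, and noting $(N_n+1)/N_n\to1$, both bounds converge to $\rho/\mathbb{E}[\tau]$. This proves the first claim.

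\textbf{CLT.} Write
\[
S_n-N_n\rho=\sum_{j=1}^{N_n}(W_j-\rho)+\Delta_n ,
\]
where $0\le\Delta_n\le W_{N_n+1}$ is the partial reward of the current incomplete cycle.

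For the main term, apply Anscombe's random-index CLT. The summands $W_j-\rho$ are i.i.d. with mean $0$ and variance $\nu^2<\infty$, and $N_n/n\to1/\mathbb{E}[\tau]$ a.s. is a deterministic constant. Hence
\[
\frac{\sum_{j\le N_n}(W_j-\rho)}{\sqrt{n\nu^2/\mathbb{E}[\tau]}}\xrightarrow{d}\mathcal{N}(0,1).
\]
Alternatively, apply the ordinary CLT along the deterministic index $\lfloor n/\mathbb{E}[\tau]\rfloor$ and control the difference by Kolmogorov's maximal inequality.

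For the remainder, show $\Delta_n/\sqrt n\to0$ in probability and conclude by Slutsky. We have $\max_{j\le m}W_j/\sqrt m\to0$ a.s. because $\mathbb{E}[W_1^2]<\infty$, via Borel--Cantelli on $\sum_m\mathbb{P}(W_m^2>\varepsilon m)<\infty$. Since $N_n+1\le n+1$, this gives $W_{N_n+1}/\sqrt n\to0$ a.s.

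\textbf{Main obstacle.} The delicate step is the random-index CLT: the normalisation uses $n$ rather than $N_n$, and $N_n$ depends on the $W_j$ through the $\tau_j$. Anscombe's condition handles this dependence cleanly because the limit of $N_n/n$ is deterministic. One point in the statement needs a remark. The hypothesis "$\tau$ has finite second moment" is phrased loosely; what the argument actually uses is $\nu^2<\infty$, together with $\mathbb{E}[\tau]<\infty$. We note also that when $w\equiv1$ the CLT reduces to the classical renewal CLT.
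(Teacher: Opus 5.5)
Your proof is correct, and its skeleton matches the paper's. Both split $\sum_{k=0}^{n}w(R_k,D_k)$ into the $N_n$ completed cycles plus the incomplete cycle containing $n$, obtain the almost sure limit from the renewal--reward structure, and obtain the CLT by applying Anscombe's theorem to the i.i.d.\ centred cycle rewards and removing the remainder with Slutsky.

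Where you genuinely differ is in how you control that remainder. The paper cites Resnick's renewal--reward theorem for the completed cycles. It then bounds the leftover by the reward of the cycle straddling $n$ and asserts that this reward has the same law as a generic cycle reward $\sum_{k=0}^{\tau-1}w(R_k,D_k)$. That identity does not hold in general: the cycle covering a fixed time is selected in a length-biased way (inspection paradox), so its length and reward are size-biased.

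Your devices never need the law of the straddling cycle. For the LLN, the upper sandwich $\sum_{j\le N_n+1}W_j$ is handled by the ordinary SLLN and $N_n/n\to1/\mathbb{E}[\tau]$, which also makes the a.s.\ part self-contained rather than cited. For the CLT, you combine the bound $W_{N_n+1}\le\max_{j\le n+1}W_j$ with Borel--Cantelli under $\mathbb{E}[W_1^2]=\nu^2+\rho^2<\infty$. This gives $\Delta_n/\sqrt n\to0$ a.s.\ using only the i.i.d.\ structure at deterministic indices.

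The paper's version is shorter, but yours is the one that actually justifies the remainder step. Your closing remark is also right: the CLT uses $\nu^2<\infty$ (together with $\rho<\infty$ and $\mathbb{E}[\tau]<\infty$), not $\mathbb{E}[\tau^2]<\infty$.
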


\noindent The proof is given in Appendix~\ref{proof-reward_convergence}. This asymptotic property provides a closed-form expression for a quantity that is typically only investigated through numerical experiments in the literature (see, for example, \citet{vu2018evaluation}). To illustrate its practical utility, we consider the asymptotic proportion of time spent in long dry spells, a quantity of considerable importance in crop production \citep{jarrett2023dry}. Example~\ref{ex:long_dry_spell} yields a closed-form expression for this proportion, which may be numerically approximated to arbitrary precision using the specification of Section~\ref{subsec:spell_duration_distribution_specification}, using the bounds derived in Appendix~\ref{subsec:approx_proportion_time_long_dry_spell}.
\begin{example}[Proportion of time in long dry spells]\label{ex:long_dry_spell}

For $d\in\mathbb{N}^*$, set $w(R_k,D_k)=\mathbbm{1}_{\{R_k=0,\; D_k\ge d\}}$. Then,
\[
\lim_{n\to\infty}\frac{1}{n}\sum_{k=0}^{n} \mathbbm{1}_{\{R_k=0,\; D_k\ge d\}}
= \frac{\mathbb{E}\bigl[\max(0,\tau^{(0)}-d)\bigr]}{\mathbb{E}[\tau]}
\quad\text{a.s.}
\]
\end{example}

We now state a central limit theorem which will be used later for goodness-of-fit. For $r \in \{0,1\}$, denote 
\begin{equation}\label{eq:nb_spell_longer_than_d}
N_{n}^{(r)}(d)\coloneqq\sum_{k=0}^{n}\mathbbm{1}(R_k=r,D_k=d)
\end{equation}
the number of $r$-spells longer than $d$ days.

\begin{proposition}\label{prop:cltintermediate}
Let $(R_{k},D_{k})_{1\le k\le n}$ be a BMCD with parameters
$\mathbf{q}^{(r)}$. Assume that the cycle duration $\tau$ has a second-order moment. Then, for any integer $d_{\max} > 2$,
\[
\sqrt{\frac{n}{\mathbb{E}[\tau]}}
\Bigg(
\frac{N_{n}^{(r)}(d)}{N_n}
-\overline F_{\tau^{(r)}}(d-1)
\Bigg)_{2\le d\le d_{\max}}
\;\xrightarrow[n \to \infty]{\;d\;}
\mathcal{N}_{d_{\max}-1}(\mathbf{0},\Sigma),
\]
where the covariance matrix $\Sigma=\big(\Sigma_{i,j}\big)_{1\le i,j\le d_{\max}-1}$
is defined by
\[
\Sigma_{i,j}
=
\overline F_{\tau^{(r)}}\!\big(\max(i,j)\big)
-\overline F_{\tau^{(r)}}(i)\,\overline F_{\tau^{(r)}}(j),
\qquad
1\le i,j\le d_{\max}-1.
\]
Which gives,
\[
\Sigma
=
\begin{pmatrix}
\overline F_{\tau^{(r)}}(1)\bigl(1-\overline F_{\tau^{(r)}}(1)\bigr)
& \cdots
& \overline F_{\tau^{(r)}}(d_{\max}-1)\bigl(1-\overline F_{\tau^{(r)}}(1)\bigr)
\\
\vdots & \ddots & \vdots
\\
\overline F_{\tau^{(r)}}(d_{\max}-1)\bigl(1-\overline F_{\tau^{(r)}}(1)\bigr)
& \cdots
& \overline F_{\tau^{(r)}}(d_{\max}-1)\bigl(1-\overline F_{\tau^{(r)}}(d_{\max}-1)\bigr)
\end{pmatrix}.
\]
\end{proposition}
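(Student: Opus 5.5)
We apply the vector form of the second part of Proposition~\ref{prop:reward_convergence}, or equivalently its Cramér--Wold version, to the indicator rewards attached to each threshold $d$. Fix $r$ and, for $2\le d\le d_{\max}$, set $w_d(R_k,D_k)=\mathbbm{1}(R_k=r,D_k=d)$. Within one cycle the chain visits $(r,d)$ at most once, and it does so exactly when $\tau^{(r)}\ge d$. Hence
\[
Y_k(d):=\sum_{j=T_{k-1}}^{T_k-1} w_d(R_j,D_j)=\mathbbm{1}(\tau^{(r)}_k\ge d),
\]
so that $\rho_d=\mathbb{E}[Y_1(d)]=\overline F_{\tau^{(r)}}(d-1)$. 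The vectors $(Y_k(d))_{d}$ are i.i.d.\ across cycles, by Appendix~\ref{subsec:spell_durations_iid}, and they are bounded. Their covariance is
\[
\mathrm{Cov}\bigl(\mathbbm{1}(\tau^{(r)}\ge i+1),\mathbbm{1}(\tau^{(r)}\ge j+1)\bigr)=\overline F(\max(i,j))-\overline F(i)\overline F(j),
\]
with the index shift $d=i+1$. This is exactly the matrix $\Sigma$ of the statement.

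Next we decompose the counting statistic. We write
\[
N^{(r)}_n(d)=\sum_{k=1}^{N_n}Y_k(d)+\epsilon_n(d),
\]
where $\epsilon_n(d)\in\{0,1\}$ accounts for the incomplete cycle in progress at time $n$ (and for the boundary term at $k=0$). Dividing by $N_n$ gives
\[
\frac{N^{(r)}_n(d)}{N_n}-\overline F(d-1)=\frac{1}{N_n}\sum_{k=1}^{N_n}\bigl(Y_k(d)-\rho_d\bigr)+\frac{\epsilon_n(d)}{N_n}.
\]
The renewal law of large numbers (Appendix~\ref{subsec:preliminary_results}, or the first part of Proposition~\ref{prop:reward_convergence} with $w\equiv1$ on the states $(0,1)$) gives $N_n/n\to 1/\mathbb{E}[\tau]$ a.s. The remainder multiplied by $\sqrt{n}$ is therefore $O(\sqrt n/N_n)\to0$ a.s.

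For the main term we use the multivariate CLT for i.i.d.\ bounded vectors, $m^{-1/2}\sum_{k=1}^m(Y_k-\rho)\Rightarrow\mathcal N(0,\Sigma)$. We then need to pass to the random index $N_n$. Since $N_n/(n/\mathbb{E}[\tau])\to1$ a.s.\ with a deterministic limit, Anscombe's theorem (equivalently, the random-time-change argument already used for Proposition~\ref{prop:reward_convergence}) gives
\[
N_n^{-1/2}\sum_{k=1}^{N_n}(Y_k-\rho)\Rightarrow\mathcal N(0,\Sigma).
\]
Writing
\[
\sqrt{\frac{n}{\mathbb{E}[\tau]}}\,\frac{1}{N_n}=\sqrt{\frac{n}{\mathbb{E}[\tau]N_n}}\cdot N_n^{-1/2},
\]
where the first factor tends to $1$ a.s., Slutsky's lemma yields the claimed limit.

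The main obstacle is the normalisation. In the paper's statement the factor $\sqrt{n/\mathbb{E}[\tau]}$ is used with division by $N_n$, and this must be reconciled so that the limit covariance is $\Sigma$ itself and not $\Sigma$ rescaled by $\mathbb{E}[\tau]$. The factorisation above does this precisely, because $n/\mathbb{E}[\tau]\sim N_n$. The second point needing care is the justification of Anscombe's condition. The cleanest route is uniform continuity in probability of the partial sums of i.i.d.\ finite-variance vectors, which follows from Kolmogorov's maximal inequality. The second-moment assumption on $\tau$ is only needed to invoke Proposition~\ref{prop:reward_convergence} as stated. For bounded indicators, finiteness of $\mathbb{E}[\tau]$ suffices for this argument.
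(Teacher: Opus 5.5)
Your argument is correct and is essentially the paper's proof. The paper applies Proposition~\ref{prop:reward_convergence} to $w^{(r)}_{\boldsymbol{\Lambda}}=\sum_{d=2}^{d_{\max}}\lambda_d\mathbbm{1}(R_k=r,D_k=d)$, identifies the cycle reward as $\sum_{d}\lambda_d\mathbbm{1}(\tau^{(r)}>d-1)$ with variance $\boldsymbol{\Lambda}^{\mathsf T}\Sigma\boldsymbol{\Lambda}$, and finishes with Cram\'er--Wold and Slutsky, so your multivariate CLT plus Anscombe step is the same mechanism with that proposition unpacked. Unpacking it is mildly better: your bound $\epsilon_n(d)\in\{0,1\}$ on the incomplete cycle does not need the positivity $w\ge 0$ that Proposition~\ref{prop:reward_convergence} assumes and that $w^{(r)}_{\boldsymbol{\Lambda}}$ violates when some $\lambda_d$ are negative.
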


\noindent The proof is given in Appendix~\ref{proof:prop_cltintermediate}. This proposition will be used to derive a goodness-of-fit test for the validation of the model in Section~\ref{subsec:validation_tools}. The integer $d_{\max}$ serves as a hyperparameter of this statistical test.

\subsection{Specification of spell duration distributions}\label{subsec:spell_duration_distribution_specification}

Estimating $\mathbf{q}^{(r)}$, $r\in\{0,1\}$, raises the issue already mentioned in \citet{1984_stern_model_fitting_daily_rainfall}: increasing the Markov memory inflates the number of parameters. To avoid this parameter explosion, we model spell durations through low-dimensional parametric families, and then recover the sequences $\mathbf{q}^{(r)}$ from equation~\eqref{eq:kozubowski_relation_bmcd_from_alternating_chain}. A wide range of models has been proposed for spell durations, including variations of geometric distributions or negative binomial distributions  \citep{1977_buishand_stochastic_modelling_daily_rainfall, 1991_racsko_serial_local_stochastic_weather_models, 2010_deni_probability_models_dry_wet_spells}. In our data, dry spell durations display markedly different upper-tail behaviours across regions, from light to heavy tails, whereas wet spell durations are generally exponentially-tailed. To handle this variability within a unified framework, the distribution for each spell duration is constructed as a particular case of the eGPD class \citep{2016_naveau_modelling_jointly_rainfall_intensities, 2026_Naveau}, which extends GP modelling from the tail to the whole support. Further details on this class and the derivation for each spell duration are given in Appendix~\ref{subsec:general_class_egpd}, and the resulting specifications are presented below.

For dry spells, the specification of the general eGPD class (detailed in Appendix~\ref{subsec:general_class_egpd}) has an additional mass at duration $1$. We call the distribution a hurdle discretised eGPD (hdeGPD).
\begin{equation}\label{eq:tau0_pmf}
\mathbb{P}_{f_1,\kappa,\sigma,\xi}\!\big(\tau^{(0)}=d\big)=
\begin{cases}
f_1, & d=1,\\
(1-f_1)\,\Big[F_{\kappa,\sigma,\xi}(d-1)-F_{\kappa,\sigma,\xi}(d-2)\Big], & d\ge 2,
\end{cases}
\end{equation}
where $F_{\kappa,\sigma,\xi}$ denotes the type-1 eGPD cdf
\begin{equation}\label{eq:type-1-eGPD}
F_{\kappa,\sigma,\xi}(z)=
\begin{cases}
\bigl(1-(1+\xi z/\sigma)^{-1/\xi}\bigr)^\kappa,
  & (\xi>0, z\ge0)\ \text{or}\ (\xi<0,0<z<-\sigma/\xi),\\[2pt]
\bigl(1-\exp(-z/\sigma)\bigr)^\kappa,
  & \xi=0, z\ge0,\\[2pt]
0, & \text{otherwise.}
\end{cases}
\end{equation}
The parameter $\xi$ drives the upper tail behaviour (similarly the so-called parameter $\kappa>0$ drives the lower Pareto tail). In particular, $\xi>0$ yields heavy (Pareto-type) tails, $\xi=0$ subexponential tails, and $\xi<0$ a bounded upper tail.

For wet spells, we use another specification within the same general eGPD class (detailed in Appendix~\ref{subsec:general_class_egpd}), with a constraint $\xi=0$ enforcing a subexponential distribution. The final result reduces to a mixture of two geometric distributions:
\begin{equation}\label{eq:mixgeom_pmf_wet}
\mathbb{P}_{\pi,p_1,p_2}\!\big(\tau^{(1)}=d\big)
=\pi\, p_1(1-p_1)^{d-1}+(1-\pi)\, p_2(1-p_2)^{d-1},
\qquad d\in\mathbb{N}^*.
\end{equation}
This distribution is slightly more complex than the classical geometric distribution \citep{1981_richardson_stochastic_simulation_weather, 1991_racsko_serial_local_stochastic_weather_models, 1998_richardson_semenov_comparison_wgen_larswg,2014_chen_review_stochastic_generation_precipitation} but substantially more flexible \citep{2010_deni_probability_models_dry_wet_spells}, and provides an excellent fit across stations and seasons in southern Europe (see in Section~\ref{sec:application_europe_data}). Also, it is  a meteorologically supported choice, as rainfall events may arise from two different mechanisms, such as convective and stratiform regimes, even if those are two very general categories.

The mixture model includes several important limiting cases. When $p_1\to 1$, the first component concentrates on duration $1$, so $\pi$ plays a role similar to $f_1$ in the dry spell model, although the remaining tail is geometric rather than Pareto. When $p_1\to p_2$, or $\pi\to0$, or $\pi\to1$, the model reduces to a single geometric and is not identifiable any more. This single geometric limiting case does not affect model performance but means that a simple geometric model would have been sufficient.

\subsection{Estimation of the parameters}\label{subsec:estimation_parameters_methodology}
The parameters of the distribution of $\tau^{(0)}$ as given in equation~\eqref{eq:tau0_pmf} are $(f_1,\kappa,\sigma,\xi)$. We estimate separately the mass at $d=1$ and the parameters of the shifted distribution when $d \ge 2$. The probability mass at $d=1$  is estimated by the empirical frequency $$\widehat f_1=\frac{1}{n}\sum_{k=1}^n \mathbbm{1}\{\tau^{(0)}_k=1\}.$$ For the remaining parameters $(\kappa,\sigma,\xi)$, we apply the Probability Weighted Moments (PWM) method detailed in \citet[Section~3.1]{2016_naveau_modelling_jointly_rainfall_intensities}, which is a variation of the method-of-moments based on the moments 
$$
\mathbb{E}\!\left[X\,\overline F^s_{\kappa,\sigma,\xi}(X)\right], \quad s=0,1,2$$ 
for the random variable $X$. PWM is commonly used in extreme value analysis and in hydrology \citep{1987_HoskingWallis_GPD}. Specifically, we apply this method to the shifted sample $\{\tau^{(0)}_k-2:\ \tau^{(0)}_k\ge 2\}$. Then the the resulting system is numerically solved just as in the package {\tt mev} \citep{mev2025}. As demonstrated in the case study of \citet[Section~4]{2016_naveau_modelling_jointly_rainfall_intensities}, the PWM estimator exhibits robustness to discretisation arising from instrumental precision.

Regarding the wet spell durations $\tau^{(1)}$, the parameters of the distribution in~\eqref{eq:mixgeom_pmf_wet} are estimated by maximum likelihood using the EM algorithm (details are provided in Appendix~\ref{subsec:em_algorithm_mixture_geometric}). Identifiability is enforced by imposing $p_1 > p_2$.

\subsection{Validation of the model}\label{subsec:validation_tools}

To assess the validity of the BMCD and of our inference scheme, we use autocorrelations to test for independence of successive spell durations (detailed in Appendix~\ref{subsec:autocorrelation_bivariate_spell_duration}), Q-Q plots for validating the fits on the marginal distribution (detailed in Appendix~\ref{subsec:simulation_based_qqplots_explanation}), and a goodness-of-fit statistic specifically designed for exit state probabilities, detailed in the following.

A distinctive feature of the BMCD is that the exit probability of a given state depends on the current duration $d$ through $q^{(r)}_d$. A constant $q_d^{(r)}= q$ leads to a memoryless two-state first-order Markov chain. Deviations from this constant indicate that the probability of exit depends on how long the sequence has remained in the current state.
For dry spells ($r=0$), there is often a decreasing part in the curve of $q_d^{(0)}$ as a function of $d$. This decreasing pattern indicates persistence: the longer the dry spell, the less likely it is to end. This persistence is often described in the literature \citep{wilks2011statistical,lopez2015markovian} but is typically limited to a given order within a Markov model. 
Since there is no such limit to the BMCD, a visual diagnostic tool focusing on estimates of $q^{(r)}_d$ can be designed. Once a parametric distribution for $\tau^{(r)}$ is fitted, yielding parameter estimates $\hat\theta$, an immediate evaluation of these probabilities can be obtained, using equation~\eqref{eq:kozubowski_relation_bmcd_from_alternating_chain}:
\begin{equation}\label{eq:link_tau_exit_proba}
\widehat q^{(r)}_{d,\hat \theta}
\;=\;
\frac{ \mathbb{P}_{\hat\theta}(\tau^{(r)}=d)}{ \mathbb{P}_{\hat\theta}(\tau^{(r)}\ge d)},
\end{equation}
where $\mathbb{P}_{\hat\theta}$ denotes probabilities under the fitted distribution.

\noindent We compare this curve, derived from the distribution \eqref{eq:tau0_pmf} with parameters $\hat \theta$, to a non-parametric empirical estimate computed directly from the recorded spell durations. Using $N_{n}^{(r)}(d)$ from equation \ref{eq:nb_spell_longer_than_d}, the empirical estimator is
\begin{equation}\label{eq:exit_proba_empirical_estimator}
\widehat q^{(r)}_{d,\mathrm{emp}}
\;=\;
\frac{N_{n}^{(r)}(d) - N_{n}^{(r)}(d+1)}{N_{n}^{(r)}(d)}.
\end{equation}

\noindent To visualise the sampling variability of the exit-probability estimate at each duration $d$, we report pointwise bands based on a binomial model for the number of exits at day $d$,  $\sqrt{\frac{\widehat q^{(r)}_{d,\hat \theta}\bigl(1-\widehat q^{(r)}_{d,\hat \theta}\bigr)}{N_{n}^{(r)}(d)}}.$ 
This diagnostic is therefore useful not only to validate the relevance of a duration-dependent model such as the BMCD, but also to check that the fitted spell-duration model exit probabilities $\widehat q_{d,\hat \theta}^{(r)}$ matches reasonably the empirical exit probabilities $\widehat q_{d,\mathrm{emp}}^{(r)}$. To go even further in that comparison, we now describe a goodness-of-fit test.

Let $(R_n,D_n)_{n\ge 0} \in \{0,1\} \times \mathbb{N}^*$ be a sequence of observations. If one assumes the sequence is generated by a BMCD, then for each regime $r\in\{0,1\}$, the model is characterized by exit probabilities $\mathbf q^{(r)}$, and one can recover $\mathbf{\hat{q}_{\hat \theta}^{(r)}} := \bigl( \hat q_{d,\hat \theta}^{(r)}\bigr)_{d\ge 1}$ from \eqref{eq:link_tau_exit_proba} and $\mathbf{\hat{q}_{\mathrm{emp}}^{(r)}} := \bigl(\hat q_{d,\mathrm{emp}}^{(r)}\bigr)_{d\ge 1}$ from \eqref{eq:exit_proba_empirical_estimator}. Thus, define the following testing hypothesis:
$$H_0^{(r)}:\ \exists\,\theta_0\in\Theta\ \text{such that}\ \mathbf{q}^{(r)}=\mathbf{q}_{\theta_0}^{(r)} \qquad\text{vs.}\qquad H_1^{(r)}:\ \forall\,\theta\in\Theta,\quad \mathbf{q}^{(r)}\neq\mathbf{q}_{\theta}^{(r)}.$$
\begin{proposition}[Chi-squared goodness-of-fit test]\label{prop:gof_q}
Consider $r\in\{0,1\}$ and an integer $d_{\max}\ge 2$. Suppose $H_0^{(r)}$ and assume the conditions of
proposition~\ref{prop:cltintermediate}. Throughout this proposition, $r$ is fixed and we write $F_{\theta}$ for the cdf of $\tau^{(r)}$ under the BMCD with exit probabilities $\mathbf{q}_{\theta}^{(r)}$. Denote $\Sigma_{\theta_0}$ the covariance matrix defined in
proposition~\ref{prop:cltintermediate} with $\theta_0$ defined in $H_0^{(r)}$. For $i \text{ and } j=1,\dots,d_{\max}-1$, define the matrix
$\mathbf{T}_{\theta_0}$ by
\[
(\mathbf T_{\theta_0})_{i,j}
=
-\mathbbm 1_{\{i=j=1\}}
\;+\;
\mathbbm 1_{\{i\ge 2\}}
\left[
\frac{\overline F_{\theta_0}(i)}{\overline F_{\theta_0}(i-1)^2}\,\mathbbm 1_{\{j=i-1\}}
-\frac{1}{\overline F_{\theta_0}(i-1)}\,\mathbbm 1_{\{j=i\}}
\right].
\]

If $\mathbf{T}_{\theta_0}\Sigma_{\theta_0}^2\mathbf{T}^{\mathsf T}_{\theta_0}$ is nonsingular, then
\[
\mathcal{Q}_{N_n} :=N_n\,
\mathbf{\Delta}^{\mathsf T}
\left(\mathbf{T}_{\theta_0}\Sigma_{\theta_0}\mathbf{T}^{\mathsf T}_{\theta_0}\right)^{-1}
\mathbf{\Delta}
\;\stackrel{d}{\longrightarrow}\;
\chi^2_{d_{\max}-1},
\]
with
\[
\mathbf{\Delta} := \bigl(\widehat {\mathbf{q}}_{\mathrm{emp}|1:d_{\max}-1}^{(r)}-\widehat {\mathbf{q}}_{\theta_0 |1:d_{\max}-1}^{(r)}\bigr).
\]
\end{proposition}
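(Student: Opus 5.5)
This is a delta-method argument built on Proposition~\ref{prop:cltintermediate}. Write $\widehat S_d := N_n^{(r)}(d)/N_n$ for $d\ge 1$; note $\widehat S_1 = 1$, since every completed $r$-spell passes through duration $1$ (up to a boundary term of order one, negligible after scaling). Proposition~\ref{prop:cltintermediate} gives joint asymptotic normality of the vector $\widehat{\mathbf S}:=(\widehat S_d)_{2\le d\le d_{\max}}$ around $\mathbf S_{\theta_0}:=(\overline F_{\theta_0}(d-1))_{2\le d\le d_{\max}}$, with covariance $\Sigma_{\theta_0}$ and rate $\sqrt{n/\mathbb{E}[\tau]}$. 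By \eqref{eq:exit_proba_empirical_estimator},
\[
\widehat q^{(r)}_{d,\mathrm{emp}} = 1-\frac{\widehat S_{d+1}}{\widehat S_d}=g_d(\widehat{\mathbf S}),\qquad 1\le d\le d_{\max}-1.
\]
Under $H_0^{(r)}$, \eqref{eq:kozubowski_relation_bmcd_from_alternating_chain} gives $q^{(r)}_{d,\theta_0}=1-\overline F_{\theta_0}(d)/\overline F_{\theta_0}(d-1)=g_d(\mathbf S_{\theta_0})$. Hence $\mathbf\Delta = g(\widehat{\mathbf S})-g(\mathbf S_{\theta_0})$ with a smooth map $g:\mathbb R^{d_{\max}-1}\to\mathbb R^{d_{\max}-1}$. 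It is smooth at $\mathbf S_{\theta_0}$ because the relevant survival values are positive; they have to be, otherwise the covariance below degenerates.

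The first step is to compute the Jacobian of $g$ at $\mathbf S_{\theta_0}$ and check that it equals $\mathbf T_{\theta_0}$, up to the index convention.
\begin{itemize}
\item For $d=1$, $g_1 = 1-\widehat S_2$, so the derivative is $-1$ in the first coordinate. This is the term $-\mathbbm 1_{\{i=j=1\}}$.
\item For $d=i\ge 2$, $g_i = 1-\widehat S_{i+1}/\widehat S_i$. Differentiating in $\widehat S_i$ gives $\overline F(i)/\overline F(i-1)^2$, and in $\widehat S_{i+1}$ gives $-1/\overline F(i-1)$.
\end{itemize}
Since the vector coordinate $j$ corresponds to $\widehat S_{j+1}$, this matches the displayed entries once the indexing shift is reconciled.

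I expect this index bookkeeping to be the main nuisance. The survival arguments in $\mathbf T_{\theta_0}$ appear shifted by one relative to a naive computation, so I would align coordinates with the convention $\Sigma_{i,j}=\overline F(\max(i,j))-\overline F(i)\overline F(j)$ of Proposition~\ref{prop:cltintermediate}. Concretely, coordinate $i$ carries $\overline F(i)$, i.e.\ $\widehat S_{i+1}$.

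The delta method then yields
\[
\sqrt{n/\mathbb E[\tau]}\;\mathbf\Delta\xrightarrow{d}\mathcal N\bigl(\mathbf 0,\mathbf T_{\theta_0}\Sigma_{\theta_0}\mathbf T_{\theta_0}^{\mathsf T}\bigr).
\]
Next, replace the normalisation $n/\mathbb E[\tau]$ by $N_n$. By the renewal strong law (the case $w\equiv 1$ of Proposition~\ref{prop:reward_convergence}, equivalently the elementary renewal theorem recalled in the appendix), $N_n/n\to 1/\mathbb E[\tau]$ a.s. Slutsky's lemma therefore gives $\sqrt{N_n}\,\mathbf\Delta\xrightarrow{d}\mathcal N(\mathbf 0,\mathbf M)$ with $\mathbf M:=\mathbf T_{\theta_0}\Sigma_{\theta_0}\mathbf T_{\theta_0}^{\mathsf T}$.

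Finally, the nonsingularity hypothesis on $\mathbf M$ lets us apply the continuous mapping theorem to $x\mapsto x^{\mathsf T}\mathbf M^{-1}x$. If $Z\sim\mathcal N(0,\mathbf M)$, then $\mathbf M^{-1/2}Z$ is standard normal in dimension $d_{\max}-1$, so $Z^{\mathsf T}\mathbf M^{-1}Z\sim\chi^2_{d_{\max}-1}$, which is the claimed limit for $\mathcal Q_{N_n}$. The statement phrases nonsingularity in terms of $\mathbf T\Sigma^2\mathbf T^{\mathsf T}$. Since $\Sigma$ is symmetric positive semidefinite, $\mathbf T\Sigma\mathbf T^{\mathsf T}$ and $\mathbf T\Sigma^2\mathbf T^{\mathsf T}$ are both nonsingular exactly when $\mathbf T^{\mathsf T}x\in\ker\Sigma$ forces $x=0$. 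So either condition suffices, and I would note this equivalence briefly.

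Two points need care.
\begin{itemize}
\item \textbf{Fixed parameter.} $\theta_0$ is treated as the true, known parameter, not an estimate. With $\hat\theta$ plugged in, the degrees of freedom would change, but the statement uses $\theta_0$.
\item \textbf{Rates.} The random normalisation $N_n$ inside $\widehat S_d$ is already accounted for in Proposition~\ref{prop:cltintermediate}, so no extra variance term arises.
\end{itemize}
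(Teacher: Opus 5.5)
Your proposal is correct and is essentially the paper's proof: the delta method applied to the CLT of Proposition~\ref{prop:cltintermediate} with Jacobian $\mathbf T_{\theta_0}$, then the Gaussian quadratic form, then Slutsky's lemma to pass from $n/\mathbb E[\tau]$ to $N_n$. Your Jacobian entries match $\mathbf T_{\theta_0}$ exactly when coordinate $j$ carries $N_n^{(r)}(j+1)/N_n$, so there is in fact no index shift to reconcile. Your handling of the boundary term in $N_n^{(r)}(1)$ and your equivalence of the $\Sigma$ and $\Sigma^2$ nonsingularity conditions are correct refinements of what the paper leaves implicit.

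For the Slutsky step, the renewal strong law of Theorem~\ref{theorem:law_large_number_elemementary_renewal_theorem_N_n} is the right tool; the paper cites the CLT for $N_n$, which is more than is needed. However, the case $w\equiv 1$ of Proposition~\ref{prop:reward_convergence} is vacuous, since it only gives $(n+1)/n\to 1$. Take $w=\mathbbm 1\{(R_k,D_k)=(0,1)\}$ instead, which gives $\rho=1$ and hence $(N_n+1)/n\to 1/\mathbb E[\tau]$.
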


\noindent The proof of Proposition \ref{prop:gof_q} is given in Appendix~\ref{proof-gof}. We also show the validity of the goodness-of-fit test on finite samples simulated under $H_0^{(r)}$ in Appendix~\ref{subsec:gof_simulated_data}. In practice, $\theta_0$ is unknown and is replaced by an estimator $\hat\theta$: in our case we use the one described in Section \ref{subsec:estimation_parameters_methodology}. This gives the elements for a statistical test. Set $\alpha\in(0,1)$ and $d_{\max}\ge2$. Compute $\mathbf\Delta_{\hat \theta}:=\widehat {\mathbf{q}}_{\mathrm{emp}|1:d_{\max}-1}^{(r)}-\widehat {\mathbf{q}}_{\hat \theta|1:d_{\max}-1}^{(r)}$, and the plug-in matrices $\Sigma_{\hat\theta}^2$ and $\mathbf T_{\hat\theta}$. Define
\[
{\mathcal{Q}}_{N_n,\; \hat \theta}:=N_n\,\mathbf\Delta_{\hat \theta}^{\mathsf T}
\left(\mathbf{T}_{\hat \theta}\Sigma_{\hat \theta}\mathbf{T}^{\mathsf T}_{\hat \theta}\right)^{-1}
\mathbf\Delta_{\hat \theta}.
\]
Under $H_0^{(r)}$, $\mathcal{Q}_{N_n} \stackrel{d}{\longrightarrow} \chi^2_{d_{\max}-1}$; thus reject $H_0^{(r)}$ at level $\alpha$ if
\[
\mathcal{Q}_{N_n,\; \hat \theta}>\chi^2_{d_{\max}-1,\,1-\alpha},
\]
provided $\mathbf{T}_{\hat \theta}\Sigma_{\hat \theta}^2\mathbf{T}^{\mathsf T}_{\hat \theta}$ is nonsingular.

\section{Application to southern European dry and wet spells}\label{sec:application_europe_data}

We fit our model to southern Europe precipitation data from the European Climate Assessment \& Dataset (ECA\&D). This dataset stems from an operational effort to collect, quality-check, and disseminate daily meteorological observations across Europe and the Mediterranean. The database is actively maintained and concatenates time series provided by national meteorological services and partners. It provides station-based daily totals of precipitation. Quality-control is applied to each observation including physical-range and repetitiveness checks (repeated identical values over multiple days), among others. Homogeneity is assessed through several standard checks. More details can be found in the Algorithm Theoretical Basis Document \url{https://www.ecad.eu/documents/atbd.pdf}. Europe is well covered by ECA\&D, with tens of thousands of stations today. We use a spatially uniform subset of around 200 stations covering the study area (latitude below 45°N). For missing precipitation records, sequences of four or more consecutive missing days are removed entirely, while gaps of three days or fewer are filled by linear interpolation. We deal separately with the spells from the four standard climatological seasons: winter (December January February), spring (March April May), summer (June July August), autumn (September October November). The spell is assigned to the season of its start date, which is standard practice \citep{1977_buishand_stochastic_modelling_daily_rainfall}. To ensure fully recorded spells, we drop the first and the last spell in each continuous segment of recorded days: this avoids partial spells induced by the arbitrary start/end of the observation windows. We only consider records from 1945 onwards, and exclude stations with fewer than 30 cumulative years of data. We classify a day as wet if the recorded daily precipitation accumulation is higher than 0.6 mm, and dry otherwise. This threshold methodology to define a wet spell is standard practice \citep{domroes1993statistical} to mitigate low-intensity measurement errors. We treat each season and each location separately. Temporal stationarity is checked in the code repository.

\subsection{Estimation of the parameters}\label{subsec:param_estimation_ecad}

Figs.~\ref{fig:hist_dry_spell_parameters} and \ref{fig:hist_wet_spell_parameters} display histograms of the estimated parameters for the dry spell distribution \eqref{eq:tau0_pmf} and the wet spell distribution \eqref{eq:mixgeom_pmf_wet}, respectively. Within each figure, columns represent individual parameter estimates, arranged from left to right as $\hat f_1$, $\hat \xi$, $\hat \sigma$, and $\hat \kappa$ for the dry spell distribution, and $\hat \pi$, $\hat p_1$, and $\hat p_2$ for the wet spell distribution. Rows represent the seasons, arranged from top to bottom as spring, summer, autumn, and winter. 

The estimates $\hat{\xi}$ (second column of 
Fig.~\ref{fig:hist_dry_spell_parameters}) range from approximately $-0.5$ (upper-bounded tail fitted distribution), to approximately $0.5$ (heavy-tailed fitted distribution), with a modal value near $0$ (exponentially-tailed fitted distribution). These variations across stations reflect the considerable heterogeneity in dry spell dynamics present in the dataset, and support the use of the hdeGPD specification, which accommodates flexible upper-tail behaviour. Most cases of high positive $\hat \xi$ (heavy-tailed dry spell duration) are reported in spring, a feature already alluded to in Fig.~\ref{fig:survival_func_sample_data_intro} analysis. The estimates $\hat f_1$, $\hat \sigma$ and $\hat \kappa$ (first, third and fourth columns of 
Fig.~\ref{fig:hist_dry_spell_parameters})  exhibit broadly similar distributions across seasons, with the notable exception of summer. Specifically, $\hat f_1$, the probability of having a $1$-day long dry spell, is slightly lower in summer (from around $0.25$ to around $0.15$), and $\hat \sigma$, the scale factor, takes considerably larger values (from around $10$ to values between $50$-$100$), also in summer. This behaviour is consistent with the tendency for long dry spells to be common in summer, unlike for other seasons for which they are extreme events.

The estimates $\hat{p}_1$ and $\hat{p}_2$  (second and third columns of Fig.~\ref{fig:hist_wet_spell_parameters}) exhibit broadly similar distributions across spring, autumn, and winter, indicating a consistent wet spell structure across these three seasons. The notable exception is summer, for which both $\hat{p}_1$ and $\hat{p}_2$ take substantially higher values. This behaviour is consistent with the tendency for summer wet spells to be markedly shorter than those occurring in other seasons. The estimates $\hat{\pi}$ (first column of Fig.~\ref{fig:hist_wet_spell_parameters}) ranges on the full unit interval, with a minor proportion of stations whose estimates are close to $0$ or $1$. In these boundary cases, the mixture of geometric distributions \eqref{eq:mixgeom_pmf_wet} effectively degenerates towards a single geometric component, so that the model retains the flexibility of a two-component mixture whilst remaining consistent with a simpler geometric specification wherever the data so dictate. A similar degeneracy arises when $\hat{p}_1 \simeq \hat{p}_2$, which cannot be noted from those histograms, but may arise in some cases (such as in Palermo-spring and Palermo-autumn as shown in the legend of Fig.~\ref{fig:histograms_fitted_geom_mix}).These observations confirm that the two-component mixture is fully exploited by the data: the additional complexity is retained only where the empirical wet spell durations genuinely exhibit two-scale behaviour, and is otherwise reduced to the simpler limiting case. 

\begin{figure}[!htbp]
    \centering
    \includegraphics[trim=0 14 0 0,clip,width=0.9\linewidth]{ 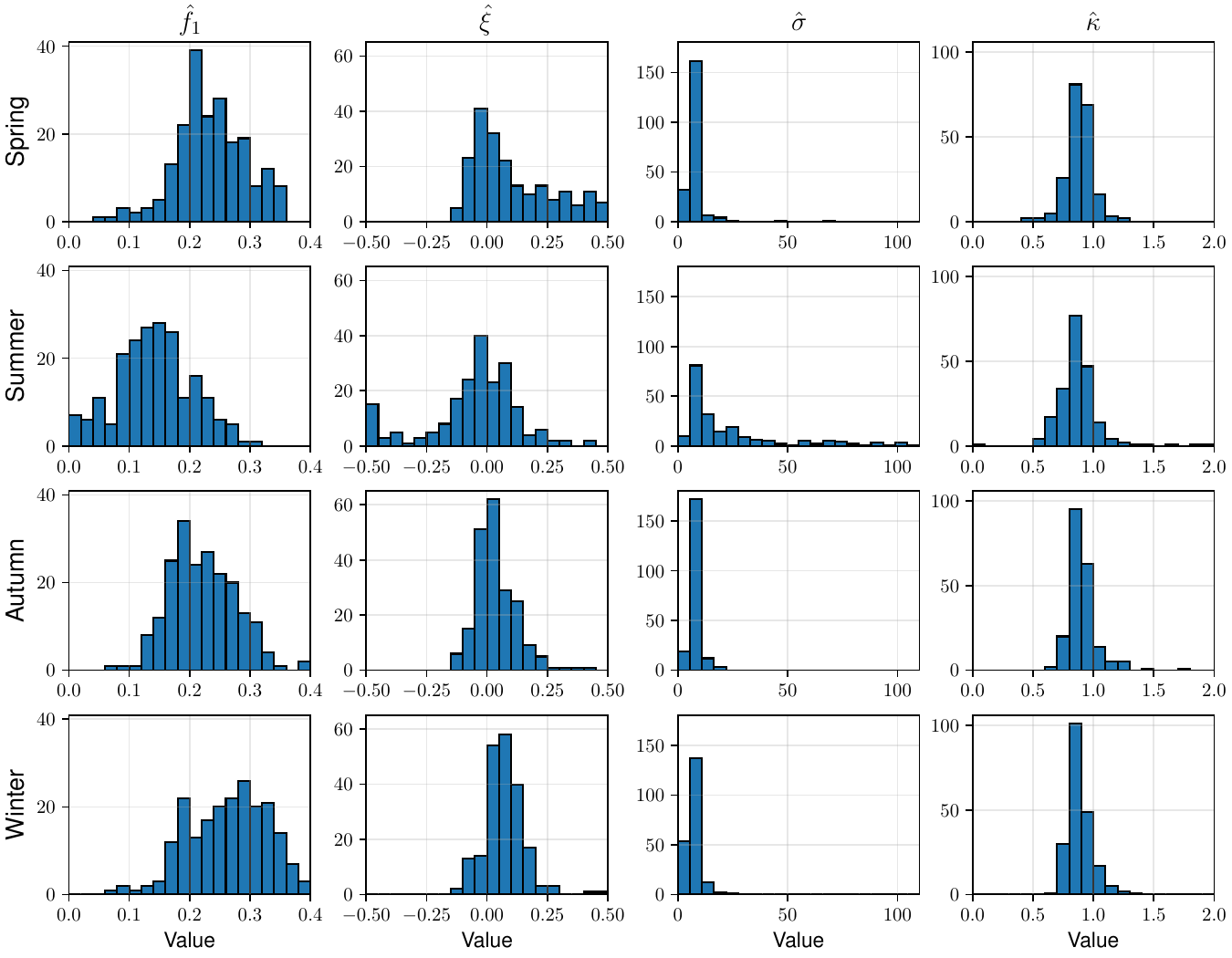}
    \caption{Histogram of each fitted parameters (from left to right: $\hat f_1$, $\hat \xi$, $\hat \sigma$, $\hat \kappa$) of dry spells $\tau^{(0)}$ distribution as specified in equation~\eqref{eq:tau0_pmf}, for every season (from top to bottom: spring, summer, autumn, winter).}
    \label{fig:hist_dry_spell_parameters}
\end{figure}

\begin{figure}[!htbp]
    \centering
    \includegraphics[trim=0 14 0 0,clip,width=0.8\linewidth]{ 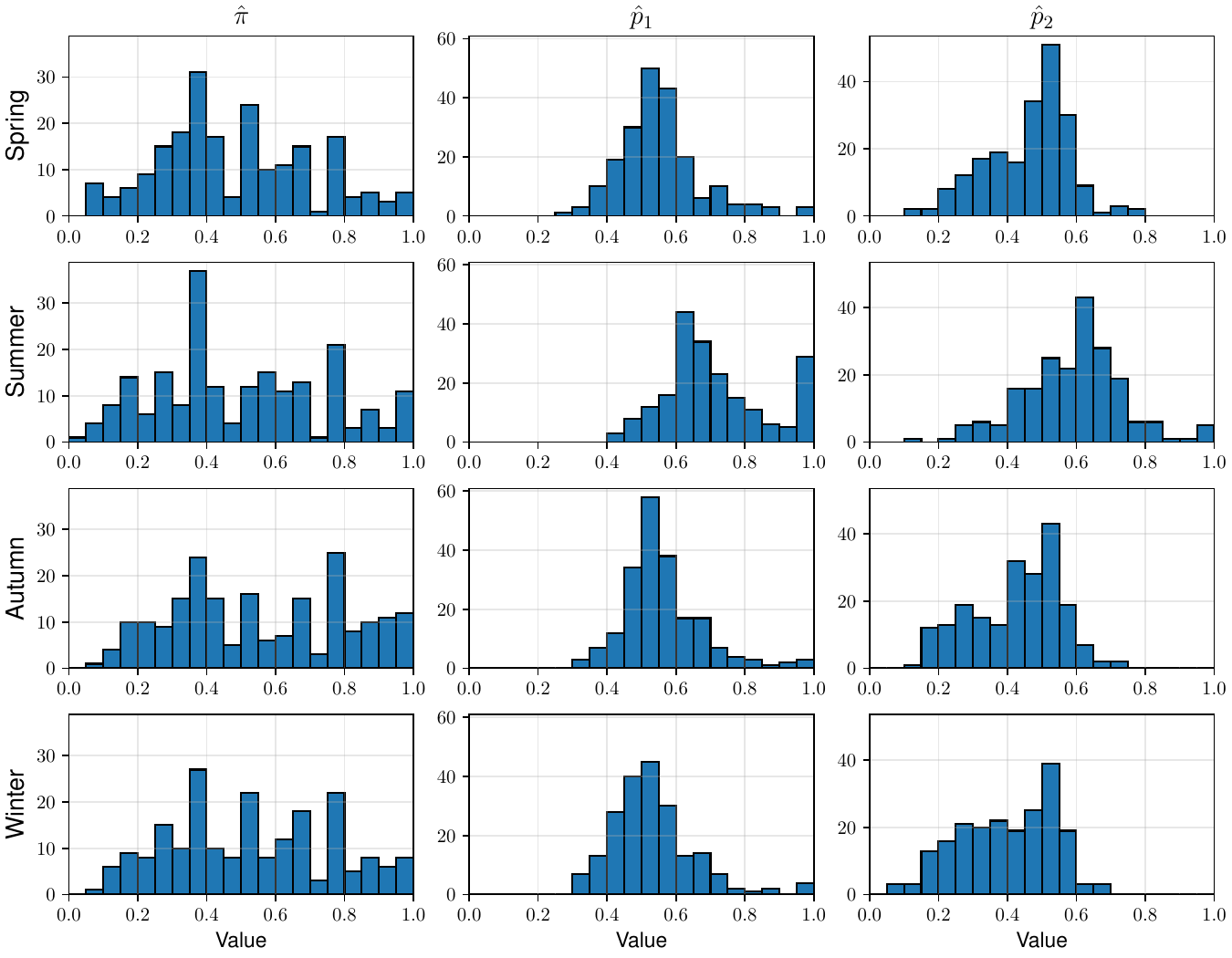}
    \caption{Histogram of each fitted parameters (from left to right: $\hat \pi$, $\hat p_1$, $\hat p_2$) of wet spells $\tau^{(1)}$ distribution as specified in equation~\eqref{eq:mixgeom_pmf_wet}, for every season (from top to bottom: spring, summer, autumn, winter).}
    \label{fig:hist_wet_spell_parameters}
\end{figure}

\subsection{Validation of the model}\label{subsec:model_validation_and_robustness}

In the following we run a series of diagnostics to assess the validity and robustness of our model. We begin by assessing the independence of consecutive spell durations, a key requirement of the BMCD stated right after definition~\ref{def:discrete_binary_markov_chain}. We use an autocorrelation function of the bivariate spell-duration process, introduced in Appendix \ref{subsec:autocorrelation_bivariate_spell_duration}. Fig.~\ref{fig:acf_plots} displays, for every season in Palermo (Italy), the estimated autocorrelation for spell durations (dry-dry (\textcolor{orange}{$\bullet$}), dry-wet ({\tiny $\blacksquare$}), wet-dry ($\blacktriangle$), and wet-wet (\textcolor{blue}{\fontsize{8}{8}\selectfont \scalebox{1.5}[1]{$\blacklozenge$}})) at each lag. A horizontal reference line at 0 highlights the absence of serial dependence, while the grey shaded envelope represents the reference bounds $\pm 2/\sqrt{C_\ell}$ (with $C_\ell$ the number of valid pairs at lag $\ell$). In Palermo-summer (top-right panel), long dry spells results in few spell cycles during a season, so the autocorrelation can only only be estimated at low lags. Shorter dry spells in Palermo-winter (bottom-right panel) result in many short spell-cycles in a season, thus autocorrelation can be estimated for greater lags. For all seasons, autocorrelations remain largely within the reference bounds and close to zero, except for high lags when $C_\ell$ is small and sampling variability is high. These results support the assumption that consecutive spell durations are independent.

\begin{figure}[!htbp]
    \centering
    \includegraphics[trim=0 8 0 5,clip,width=0.75\linewidth]{ 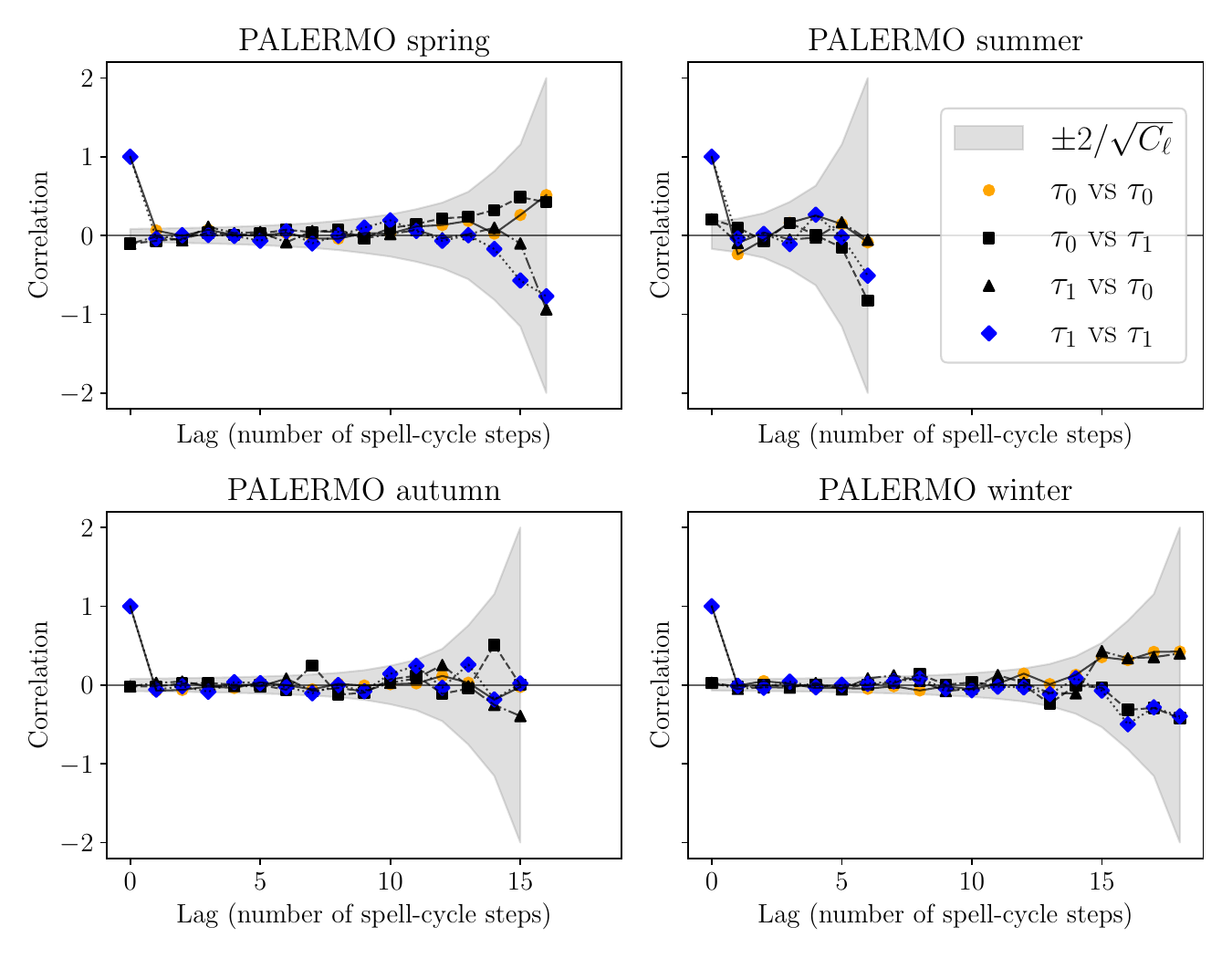}
    \caption{Bivariate autocorrelation of spell durations (detailed in Appendix~\ref{subsec:autocorrelation_bivariate_spell_duration}) in Palermo for every season. The legend (top-right panel) shows dry-dry (\textcolor{orange}{$\bullet$}), dry-wet ({\tiny $\blacksquare$}), wet-dry ($\blacktriangle$), and wet-wet (\textcolor{blue}{\fontsize{8}{8}\selectfont \scalebox{1.5}[1]{$\blacklozenge$}}) autocorrelations symbols for each lag. The grey band shows the reference bounds (also detailed in Appendix~\ref{subsec:autocorrelation_bivariate_spell_duration})).}
    \label{fig:acf_plots}
\end{figure}

Next, the fitted spell-duration probability mass function (pmf) from \eqref{eq:tau0_pmf} for dry spells or \eqref{eq:mixgeom_pmf_wet} for wet spells is compared to its empirical counterpart. Figs.~\ref{fig:histograms_fitted_ext_gpd} and \ref{fig:histograms_fitted_geom_mix} display the empirical histogram (coloured bars) and overlay the fitted probability mass function (black curve). The close alignment of both pmfs indicates a good fit of the bulk of the distribution. Figs.~\ref{fig:qqplot_fitted_dry} and \ref{fig:qqplot_fitted_wet} display the simulation-based Q-Q plots detailed in Appendix~\ref{subsec:simulation_based_qqplots_explanation}. When fitted distributions of dry spell duration have a high value of $\hat \xi$ such as Palermo-spring (top-left panel), with $\hat \xi = 0.32$, the upper quantiles exhibit high variance, reflecting the possibility of extremely long dry spells relative to the bulk. It is worth emphasising that the $300$-day simulated dry spell is an estimate of the quantile of order $0.05$ of the longest dry spell in about $70$ years, and therefore represents a genuinely rare extreme event. It should also be noted that with our season assignment method, a dry spell of this magnitude beginning in spring would span summer and autumn, and possibly extend into winter. In Palermo-summer (top-right panel), the fitted scale parameter is large $\hat \sigma = 28.5$ but the shape parameter is near zero $\hat \xi = 0.02$ producing consistently high quantiles with lower variance. For every season, the points lie close to the identity line and within the bootstrap envelope, showing that simulated quantiles from the fitted model are close to empirical ones. Wet spell simulated quantiles lie close to the identity line and within the bootstrap envelope across all seasons. Unlike dry spells, the upper quantiles exhibit markedly lower variance, reflecting the subexponential tail of the fitted mixed geometric distribution, which makes extreme wet spell durations unlikely to greatly exceed recorded ones.


\begin{figure}[!htbp]
    \centering
        \includegraphics[trim=25 20 25 25,clip,width=0.75\linewidth]{ 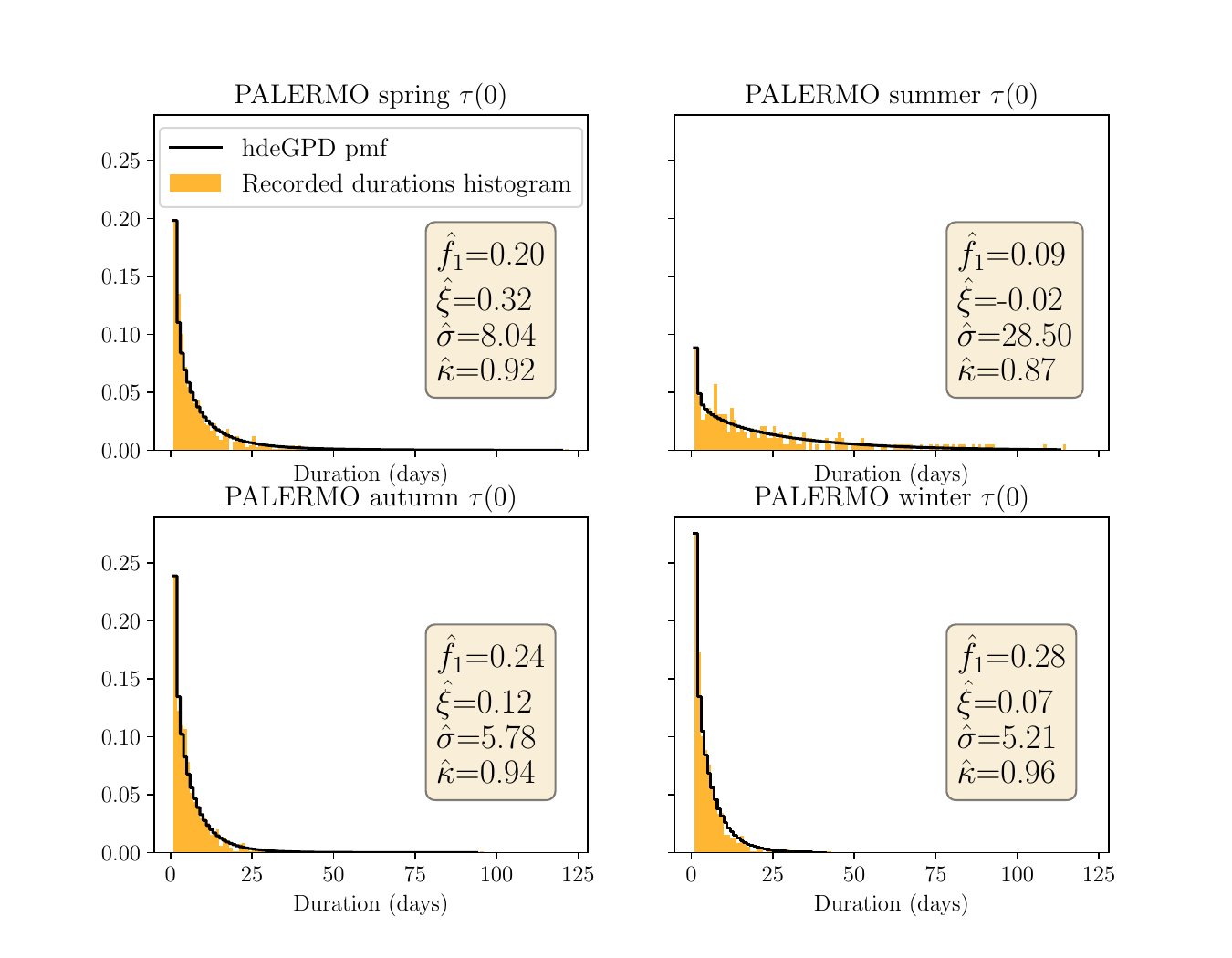}
    \caption{Histogram of recorded dry spell durations $\tau^{(0)}$ in Palermo and for every season. The black curve is the fitted discretised extended-GPD probability mass function \eqref{eq:tau0_pmf}. The fitted parameters are indicated in the legend.}
    \label{fig:histograms_fitted_ext_gpd}
\end{figure}

\begin{figure}[!htbp]
    \centering
        \includegraphics[trim=25 20 25 25,clip,width=0.75\linewidth]{ 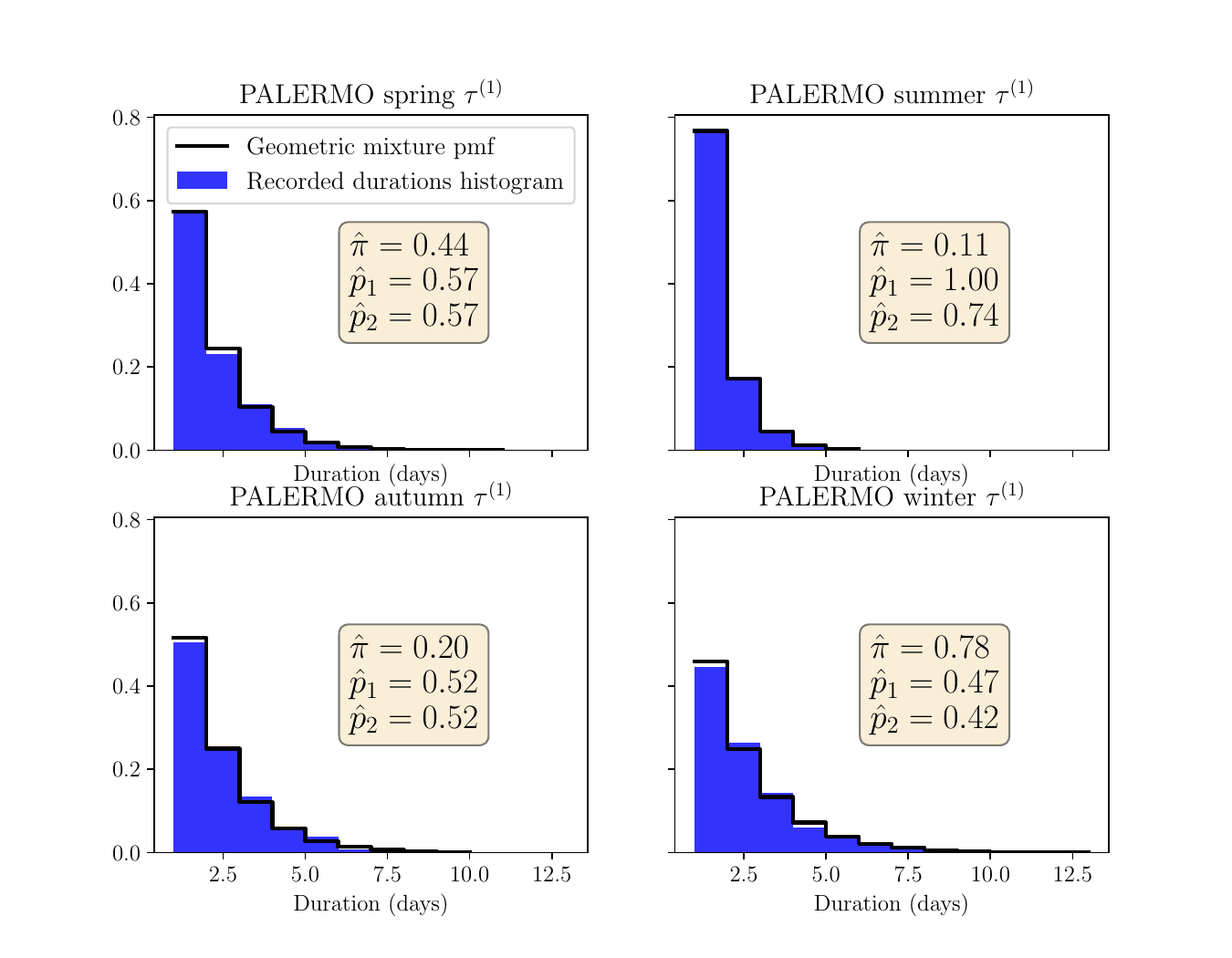}

    \caption{Histogram of recorded wet spell durations $\tau^{(1)}$ in Palermo and for every season. The black curve is the fitted probability mass function 
    \eqref{eq:mixgeom_pmf_wet} of the
    mixture of two geometric random variables. The fitted parameters are indicated in the legend.}
    \label{fig:histograms_fitted_geom_mix}
\end{figure}

\begin{figure}[!htbp]
    \centering
        \includegraphics[trim=25 20 25 25,clip,width=0.75\linewidth]{ 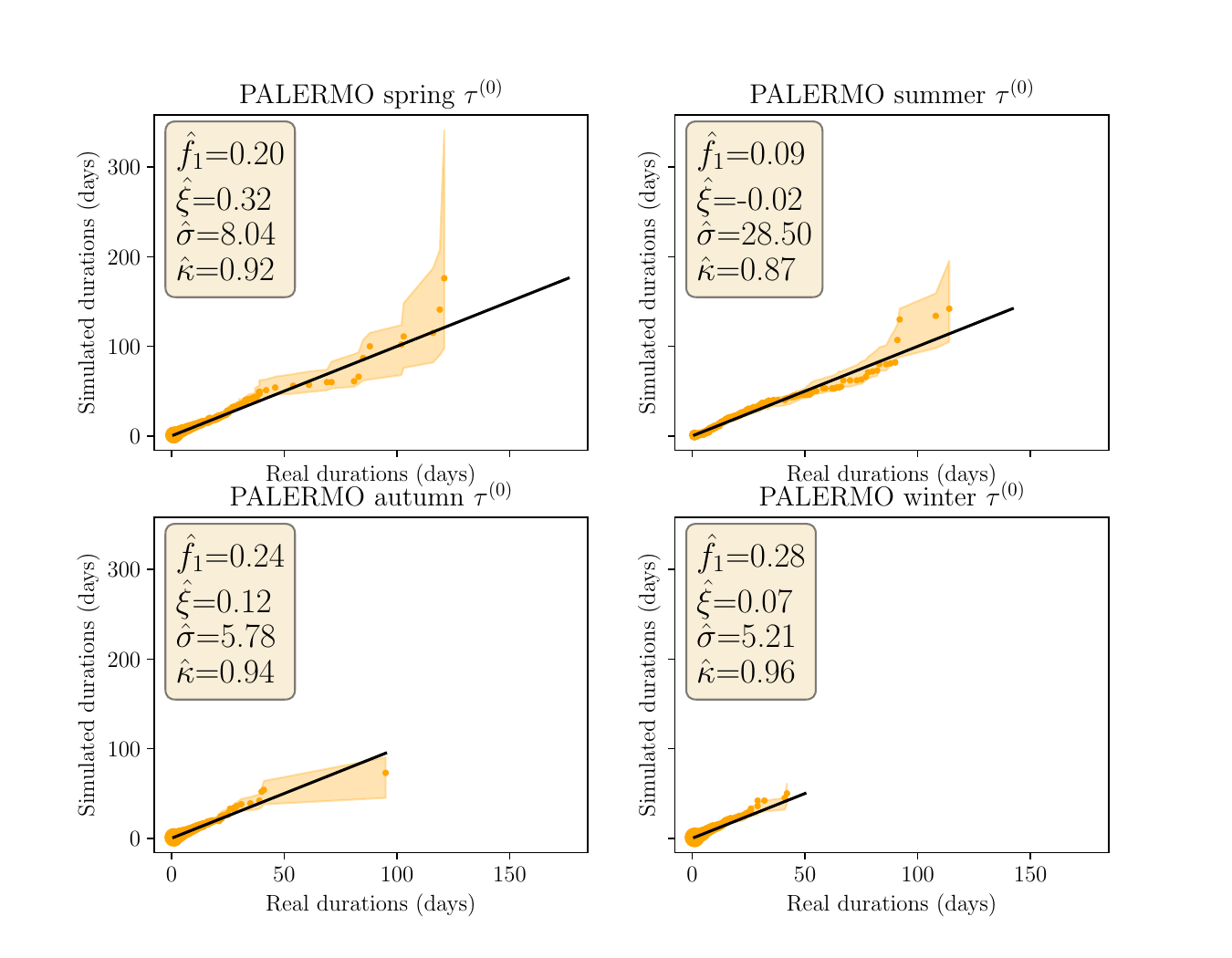}
    \caption{Q-Q plots (detailed in Appendix~\ref{subsec:simulation_based_qqplots_explanation}) for dry spell durations in Palermo and for every season. Sorted simulated dry spell durations $\tau^{(0)}$ from the fitted model \eqref{eq:tau0_pmf} are plotted against sorted recordings: the diagonal is equality. Bootstrap envelopes show sampling variability under the fitted model.}\label{fig:qqplot_fitted_dry}
\end{figure}

\begin{figure}[!htbp]
    \centering
        \includegraphics[trim=25 20 25 25,clip,width=0.75\linewidth]{ 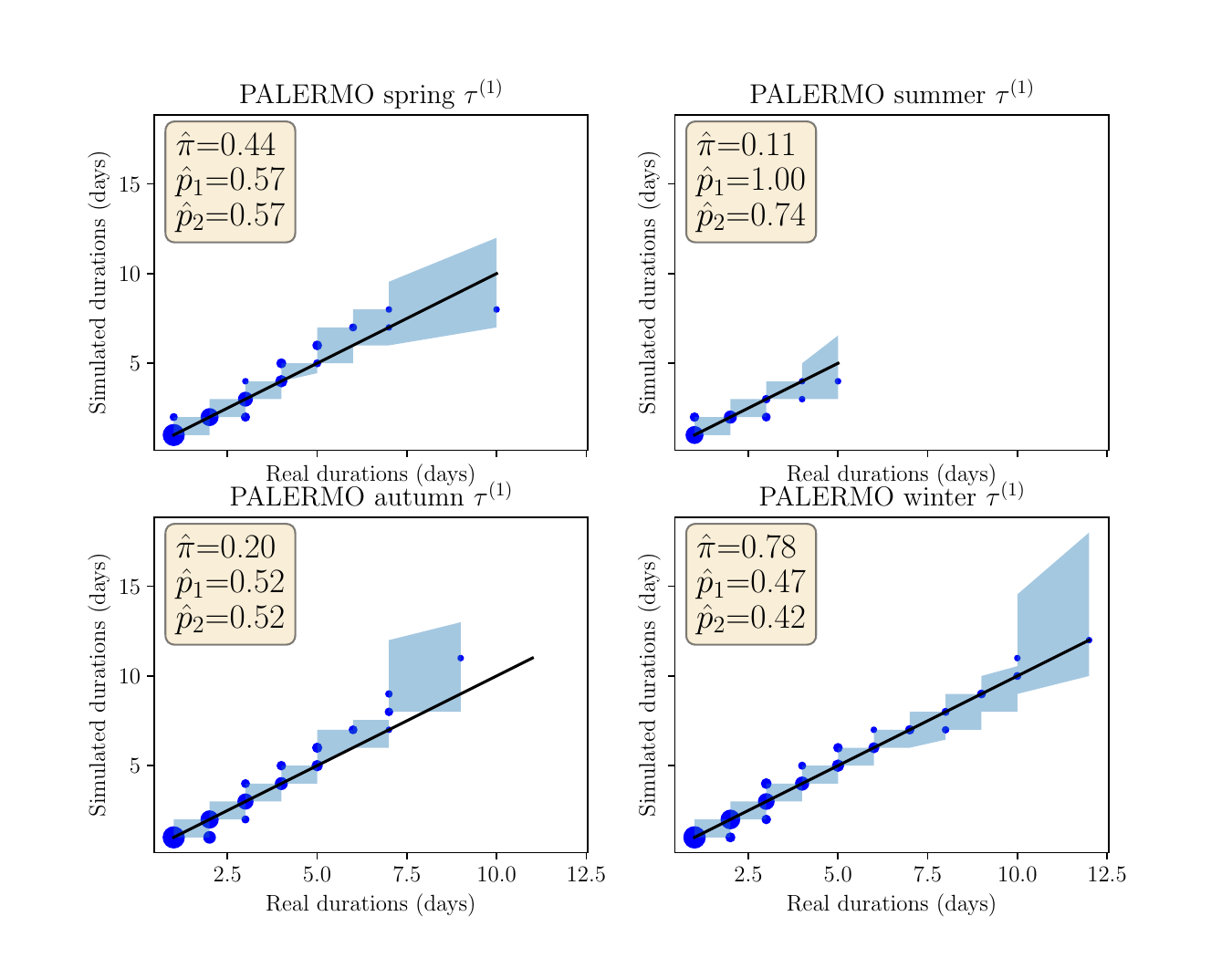}
    \caption{Q-Q plots (detailed in Appendix~\ref{subsec:simulation_based_qqplots_explanation}) for wet spell durations in Palermo and for every season. Sorted simulated wet spell durations $\tau^{(1)}$ from the fitted model \eqref{eq:mixgeom_pmf_wet} are plotted against sorted recordings: the diagonal is equality. Bootstrap envelopes show sampling variability under the fitted model. }\label{fig:qqplot_fitted_wet}    
\end{figure}

We now study how well our modelling reproduces the exit state probabilities, in particular the persistence of dry spells, that is, the fact that the exit probability from a dry spell tends to decrease with its current duration. To this end, we compare $\widehat{q}^{(0)}_{d,\hat\theta}$ and  $\widehat{q}^{(0)}_{d,\mathrm{emp}}$ introduced in equations \eqref{eq:link_tau_exit_proba} and \eqref{eq:exit_proba_empirical_estimator}, using the methods detailed in Section \ref{subsec:validation_tools}, in Palermo and for every season. Fig.~\ref{fig:proba_exit_state} displays estimates of the exit probabilities as functions of the elapsed duration (in days) since dry spell start, with $\widehat{q}^{(0)}_{d,\hat\theta}$ (black crosses) and $\widehat{q}^{(0)}_{d,\mathrm{emp}}$ (orange dots). Fig.~\ref{fig:proba_exit_state_belgrade} shows the same display for Belgrade (Serbia) where $\hat \xi < 0 $. The grey band indicates pointwise uncertainty under the fitted model. The empirical estimates reveal a characteristic two-phase behaviour. First, there is a quick decline for the first three to four days, which applies to every station. Then, there is a second part which is closely linked to the sign of the fitted $\hat \xi$. There can be a slow decrease (for Palermo in spring, autumn and winter, where $\hat \xi > 0 $), a nearly-constant slope (for Palermo in summer, where $\hat \xi \simeq 0 $) or an increase (for Belgrade in spring, where $\hat \xi < 0 $). The slow-decay regime is most cleanly illustrated by Palermo-spring (top-left panel of Fig.~\ref{fig:proba_exit_state}), in which $\widehat{q}^{(0)}_{d,\mathrm{emp}}$ falls from approximately $\widehat{q}^{(0)}_{1,\mathrm{emp}} = 0.2$ to $\widehat{q}^{(0)}_{5,\mathrm{emp}} = 0.1$ over the first $5$ days and then declines slowly to around $0.05$ after $30$-$40$ days. The near-constant regime is visible in Palermo-summer (top-right panel of Fig.~\ref{fig:proba_exit_state}): the initial quick decline is still present, but beyond $4$ days there is no clear downward trend. Finally, the rising regime of Belgrade-spring (Fig.~\ref{fig:proba_exit_state_belgrade}) exhibits the same quick initial decrease, but is thereafter characterised by an upward drift in $\widehat{q}^{(0)}_{d,\mathrm{emp}}$, indicating that longer dry spells become progressively easier to exit. Autumn and winter (bottom-left and bottom-right panels) also belong to the slow-decay regime, but display a slightly more irregular transition between the two phases. In autumn, for instance, the quick initial decrease persists over the first three days (from $\widehat{q}^{(0)}_{1,\mathrm{emp}} = 0.23$ to $\widehat{q}^{(0)}_{3,\mathrm{emp}} = 0.17$), but is followed by a transient rise ($\widehat{q}^{(0)}_{4,\mathrm{emp}} = 0.20$, $\widehat{q}^{(0)}_{5,\mathrm{emp}} = 0.19$) before the gradual decline resumes, reaching values around $\widehat{q}^{(0)}_{27,\mathrm{emp}} = 0.13$. A broadly similar pattern is observed in winter, and the phenomenon appears across almost all stations in both autumn and winter, suggesting a structural feature. We now assess how the estimates $\hat{q}_{d,\hat{\theta}}^{(0)}$, $d = 1, \ldots$ (black crosses) obtained under the hdeGPD model compare with the empirical values in each regime. The model reproduces the quick initial decrease throughout and accommodates all three long-duration behaviours: the slow asymptotic decay in Palermo-spring (top-left panel), the near-flat curve in Palermo-summer (top-right panel), and the gradual rise in Belgrade-spring The latter is consistent with the fitted $\hat\xi < 0$: a negative shape parameter imposes a finite upper bound on the spell duration, so the fitted exit probability rises with $d$ and reaches $1$ at that bound. It also provides a satisfactory fit for the two-phase decline observed in autumn and winter. However, the transient elevation in $\widehat{q}^{(0)}_{d,\mathrm{emp}}$ immediately following the initial quick decline is not captured by the hdeGPD distribution, which lacks the flexibility to accommodate this feature: this limitation could be investigated in future work. By fitting a spell-duration distribution with few parameters and recovering~\eqref{eq:link_tau_exit_proba}, rather than relying on a non-parametric estimator such as~\eqref{eq:exit_proba_empirical_estimator}, the approach yields smooth estimates and avoids the parameter instability noted by~\citet{1984_stern_model_fitting_daily_rainfall}. In summary, the empirical exit-probability sequences exhibit complex behaviour, with three qualitatively distinct long-duration regimes indexed by the sign of $\hat\xi$, yet the general trend is well reproduced by our model in each case. Recall that a classical finite-state Markov chain of order $d_0$ would yield a constant sequence $q_{d}^{(r)}$ for $d > d_0$, which would lack flexibility when $d_0$ is small, and would suffer from noisy estimates when $d_0$ is large. This underscores the advantage of the suggested approach.

In order to quantify the adequacy of the exit state probabilities modelling, we apply the chi-squared goodness-of-fit test of proposition~\ref{prop:gof_q} to dry spell durations, at each station and season. We consider two models: our hdeGPD-based specification and a model with dry spell duration following a geometric distribution (matching the common two-state first-order Markov model). The cut-off $d_{\max}$ is chosen adaptively so that at least $20$ dry spells are longer than $d_{\max}$ ($N_{n}^{(0)}(d_{\max}) \ge 20$), ensuring that the central limit theorem approximation used in proposition~\ref{prop:gof_q} remains valid. Fig.~\ref{fig:pvalues_map_south_europe_geom_vs_egpd} display the map of p-values for each season: from spring in the top panel to winter in the bottom panel. For each station, the p-values associated with the two models are displayed, namely the geometric model (large background circle), and the hdeGPD model (smaller foreground circle). Brown and red shades indicate smaller p-values (approximately 0 and 0.01), whereas yellow and green shades indicate larger p-values (approximately 0.5 and 1). Hence, for a given station and season, an improvement in exit-state modelling (non-rejection of the goodness-of-fit test in proposition~\ref{prop:gof_q} at the 1\% level) is reflected by a large brown circle containing a smaller circle in a lighter shade. The summer panel contains relatively few large brown circles, suggesting that the geometric specification already provides a reasonably adequate description of exit states during this season, which is consistent with the near-constant exit state highlighted in Fig.~\ref{fig:proba_exit_state}. In the other seasons, the north-eastern part of the domain, particularly around Romania and Bulgaria, also shows fewer large brown circles, again indicating a comparatively satisfactory fit of the geometric model in these regions. By contrast, large brown circles are common elsewhere, revealing poor performance of the geometric specification, especially in Spain, Portugal, and along the eastern Adriatic coast. Turning to the p-values for the hdeGPD model, one observes that a majority of the large brown circles contain smaller circles with lighter shades, whereas the reverse pattern is rare, meaning a reduced number of rejected stations. The stations for which the hdeGPD specification is still rejected by the goodness-of-fit test appear to be concentrated mainly in northern Portugal and Spain and, to a lesser extent, along the eastern Adriatic coast, where some improvement nevertheless remains visible. Although the hdeGPD distribution is flexible, it may still be unable to capture all station-specific spell-duration behaviours: allowing for a broader class of duration distributions could further improve the fit and reduce the remaining rejections.

\begin{figure}[!htbp]
    \centering
    \includegraphics[width=0.75\linewidth]{ 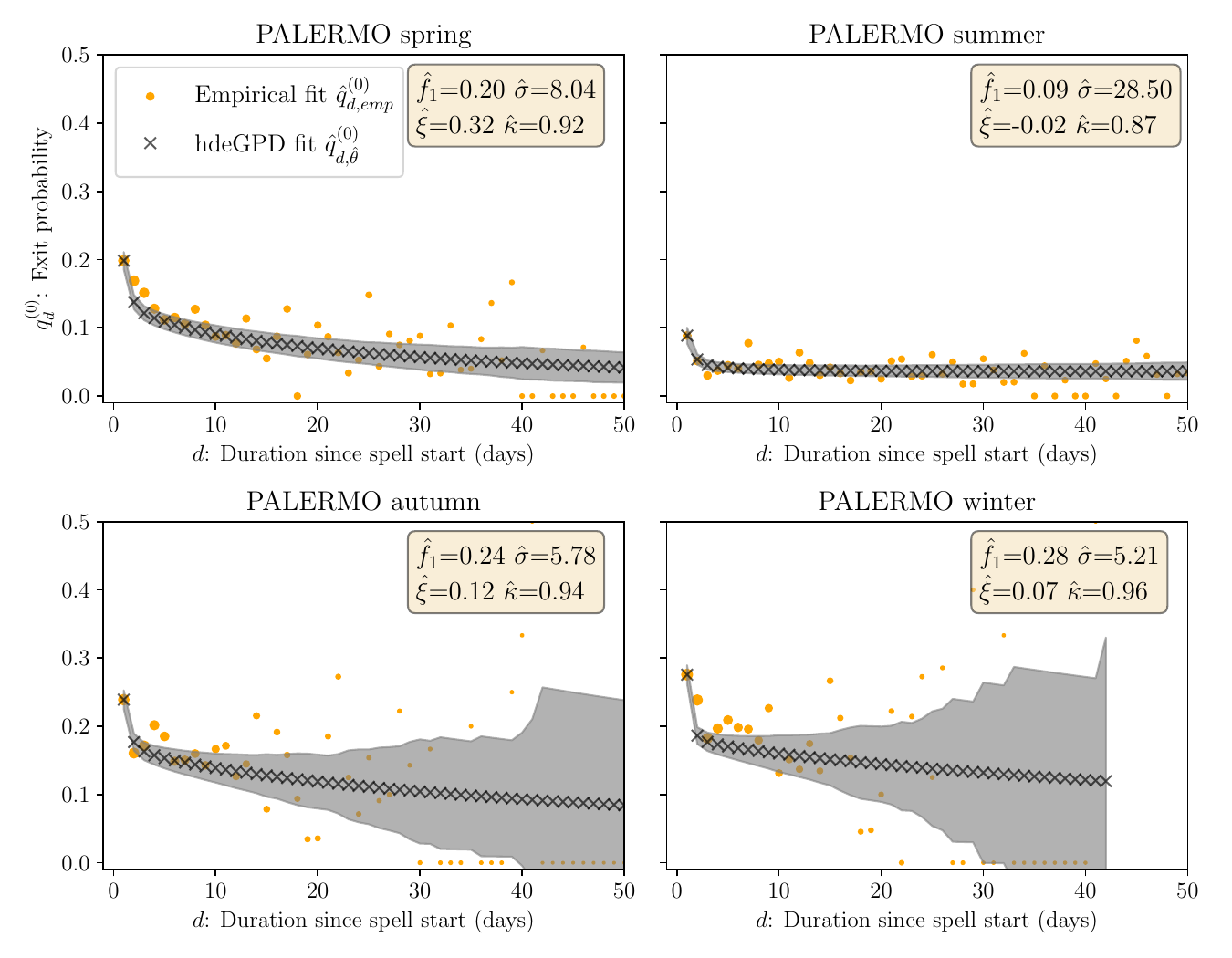}
    
    \caption{Exit probabilities for dry spells $\tau^{(0)}$ as a function of elapsed duration (in days) since spell start in Palermo and for every season. Empirical estimations (\textcolor{orange}{$\bullet$}), from equation \eqref{eq:exit_proba_empirical_estimator}, and model-implied estimations ($\times$), from equation \eqref{eq:link_tau_exit_proba}.  Shaded bands indicate pointwise uncertainty under the fitted model.}
\label{fig:proba_exit_state}
\end{figure}
\begin{figure}
    \centering
    \includegraphics[width=0.4\linewidth]{ 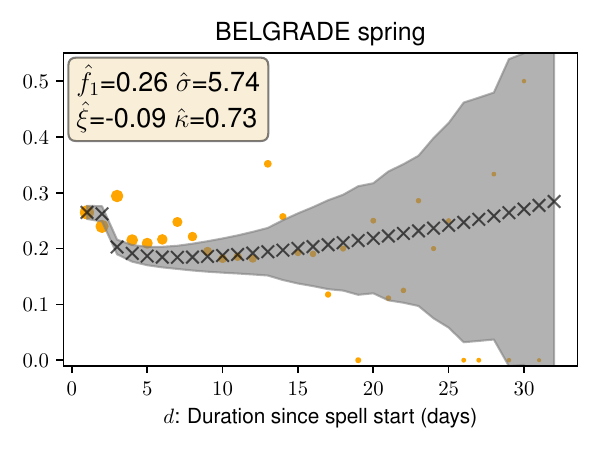}
    \caption{Same display as Fig.~\ref{fig:proba_exit_state}, for Belgrade (Serbia) in spring, for which $\hat \xi = -0.09$.}
    \label{fig:proba_exit_state_belgrade}
\end{figure}

\begin{figure}[!htbp]
\centering
\includegraphics[width=0.8\textwidth]{ 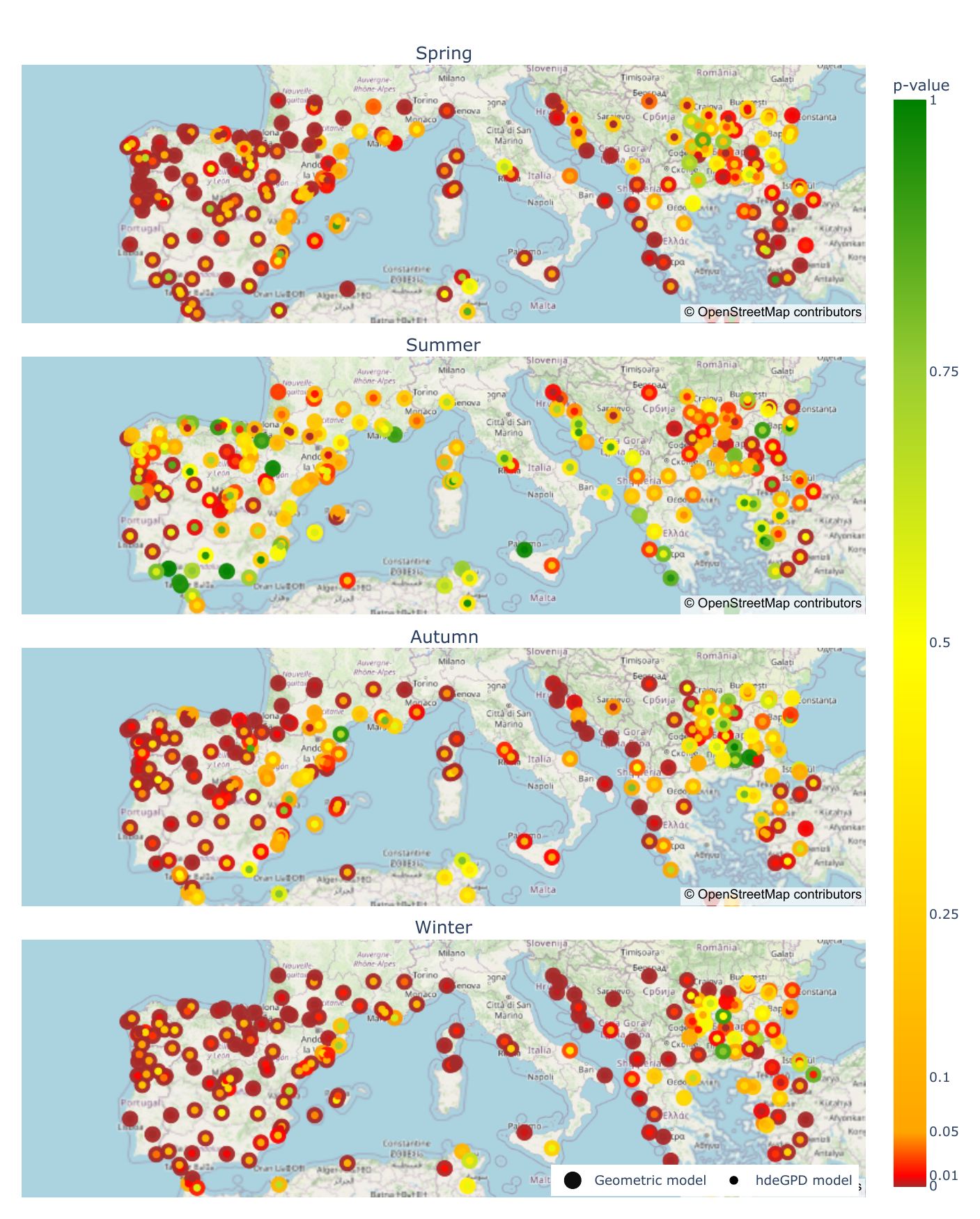}
\caption{Maps of p-values (brown-red near $0$, yellow-green between  $0.5$ and $1$) of the goodness-of-fit test (proposition~\ref{prop:gof_q}), under the geometric (large background circle) and the hdeGPD (smaller foreground circle) models in southern Europe for every season (from spring in top panel, to winter in bottom panel). A large brown circle containing a smaller not-brown circle means an improvement of the modelling (non-reject of the test at level $1\%$).}
\label{fig:pvalues_map_south_europe_geom_vs_egpd}
\end{figure}

\subsection{Improvement for extreme dry spell risk assessment}

The diagnostic results of Section~\ref{subsec:model_validation_and_robustness} confirm that, across most of the domain and seasons considered, the hdeGPD-based model from Section~\ref{subsec:spell_duration_distribution_specification} provides a substantially better description of spell duration behaviour than the geometric model arising from a two-state first-order Markov chain. In particular, it offers an improved fit in the tail, where the geometric distribution may notably underestimate the probability of extreme dry spells, as illustrated by the analysis of Figure~\ref{fig:survival_func_sample_data_intro}. We now examine how this improvement translates into the assessment of risk linked to these events.

Following standard practice in extreme value analysis \citep[Section 4.3.1]{coles2001introduction}, we consider the mean residual duration  of a dry spell after $d$ dry days which is the expected remaining duration of a dry spell which has already lasted for at least $d$ days:
\begin{equation}\label{eq:mean_residual_dry_spell_duration}
\mathbb{E}\bigl[\tau^{(0)}-d \mid \tau^{(0)}>d\bigr].
\end{equation}
The details of the estimation procedure are given in Appendix~\ref{subsec:expected_value_long_dry_spells}. In particular we use the bounds in equation~\eqref{eq:final_bounds_proportion_long_dry_spells_refined} to compute an approximation up to a precision of $10^{-5}$ for the hdeGPD model, and the closed-form expression from equation~\eqref{eq:mean_excess_duration_dry_spells_markov_case} for the geometric model. Fig.~\ref{fig:mean_residual_duration_long_dry_spells_south_europe} displays these quantities, for both models and for each station in southern Europe, during spring. Blue tones indicate near $0$ residual duration and yellow to red tones indicate residual duration between $50$ and $100$ days. From top to bottom, the three panels display maps computed with thresholds of 20, 40, and 60 consecutive dry days. Each station is represented by two concentric circles: the larger background circle corresponds to the geometric Markov model, while the smaller foreground circle corresponds to our model with hdeGPD specification. Two circles of notably different colours at a given station reflect a substantial difference in modelled exposure to prolonged dry spells. At the 20-day threshold, the two models produce broadly comparable estimates across most of the domains. However, a few stations, mostly in the South, exhibit warmer tones for the hdeGPD circle than for the geometric one, indicating a higher mean residual duration. This pattern is consistent with the tendency of the geometric model to underestimate the probability of long dry spells. As the threshold increases to 40 and then 60 days, a growing number of stations are affected by this discrepancy. As would be expected, the geometric model's underestimation of long dry spells becomes increasingly consequential at higher thresholds, where the tail behaviour of the distribution dominates. The largest discrepancies are concentrated along the Mediterranean coastline, where the hdeGPD specification yields mean residual durations substantially higher than those of the geometric model, which severely underestimate exposure to extreme dry conditions.
\begin{figure}[!htbp]
\centering
\includegraphics[width=0.8\textwidth]{ 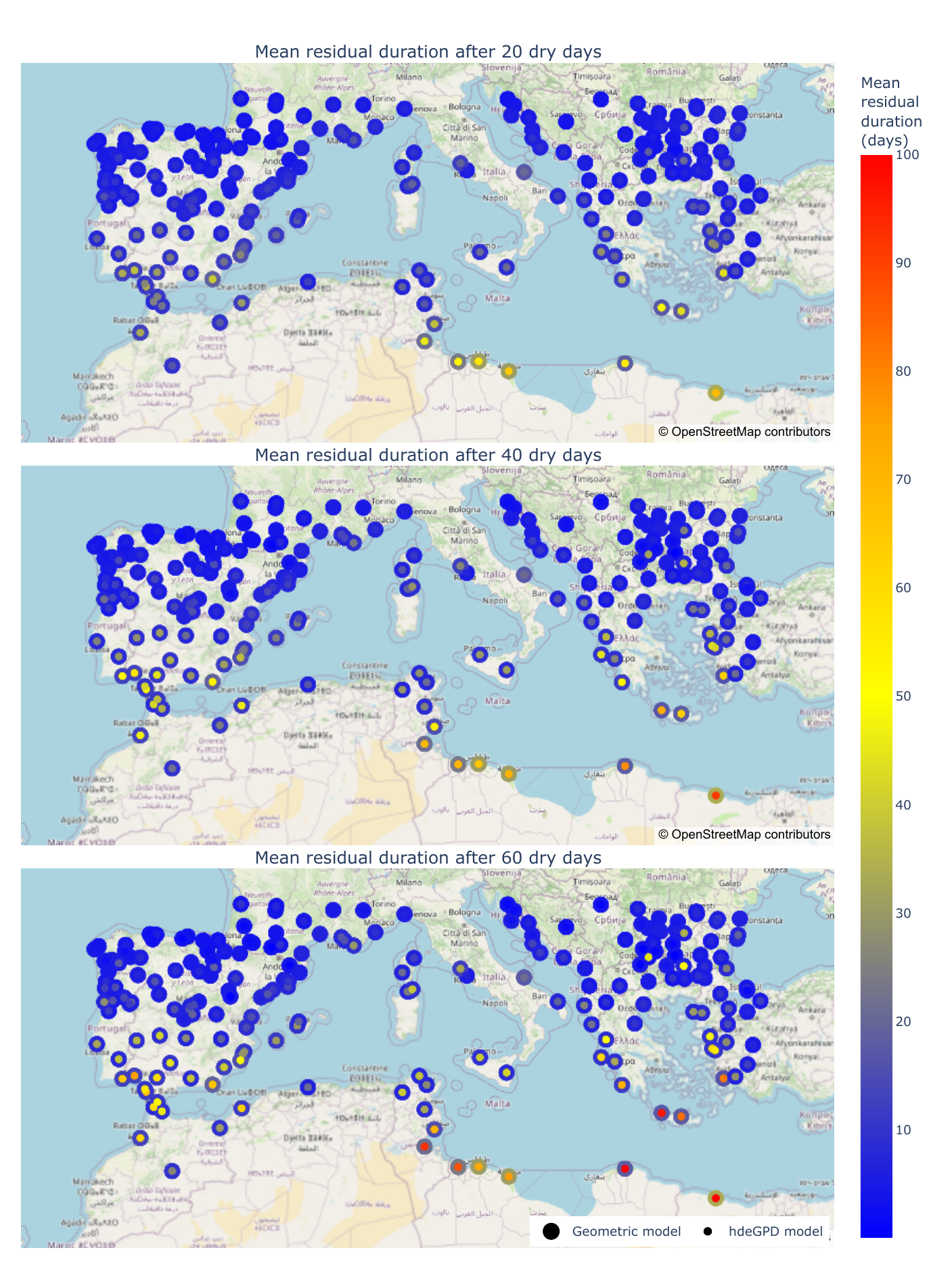}
\caption{Maps of mean residual duration of dry spell in spring, from equation \eqref{eq:mean_residual_dry_spell_duration}, under the geometric model (large background circle) and the hdeGPD model of Section \ref{subsec:spell_duration_distribution_specification} (smaller foreground circle), after (from top to bottom panel): 20, 40, and 60 dry days. Concentric circles with significant colour differences (blue for low to red for large durations) indicate substantially different predicted exposure to extreme dry spells.}
\label{fig:mean_residual_duration_long_dry_spells_south_europe}
\end{figure}

The diagnostics confirm the validity of the proposed framework. The fitted distributions agree well with the empirical data, the hdeGPD specification captures the declining exit probabilities of dry spells that the geometric model cannot reproduce, and this improvement translates into materially higher risk estimates for severe dry spells, especially along the Mediterranean coastline.

\section{Conclusion}\label{sec:conclusion}

This article introduced the Binary Markov Chain with Duration (BMCD) model, addressing the two blind spots identified in the introduction: a flexible joint modelling of the bulk and the tails of the spell duration distribution, and a formal equivalence between alternating renewal chains and a class of Markov chains. Applied to around 200 stations of the ECA\&D network in southern Europe, the model yielded three findings that illustrate the practical value of filling these gaps. First, no single tail behaviour describes dry spell durations across the region. Some stations display subexponential decay, while others exhibit markedly heavier tails or, conversely, behaviour consistent with a bounded support. Such heterogeneity falls outside the geometric family imposed by any finite-order Markov chain, and the eGPD specifications proved well suited to capturing it across stations and seasons. Second, we assessed the fit quality through histograms and Q-Q plots and then applying a goodness-of-fit test, applied both to the BMCD model and a two-state first-order Markov baseline. The latter showed an improvement of the fit mostly pronounced at Mediterranean stations and during spring, autumn, and winter. In summer, the simple baseline already produced satisfying results. Third, the improved tail modelling yielded revised predictions of exposure to long dry spells. The spring map reveals that the stations concentrated along the Mediterranean coast are the one for which the exposure was substantially underestimated by the Markov baseline.

Several limitations of the present work suggest natural extensions. The treatment of seasonality remains coarse: each spell is assigned to the season of its starting date, which introduces discontinuities at season boundaries and ignores any intra-seasonal evolution. A natural refinement would be to introduce a latent weather state on which the spell duration distribution is conditioned, thereby allowing a more flexible seasonal specification and smoothing the transition between regimes. The model also offers a convenient framework for the detection of climate change signals: covariates such as long-term temperature anomalies could be embedded directly into the eGPD parameters, or, alternatively, the model could be fitted on disjoint sub-periods and the resulting parameter estimates compared.
The most substantial direction for further work is the extension to multiple sites. The Markov representation, rather than the renewal one, makes this extension natural. The multisite framework of \citet{1998_wilks_multisite_stochastic_precipitation} is built around two-state first-order Markov chains, in which a constant exit probability governs spell duration (their equation (5)), and therefore inherits the underestimation of long dry spells documented above. Replacing this constant exit probability with a duration-dependent sequence, as in our construction, would yield multisite generators whose marginal spell duration distributions can match the heterogeneous tail behaviour observed in the data. Constructing an appropriate spatial dependence structure remains an open problem, which we leave to future work.

\section{Competing interests}
No competing interest is declared.

\section{Author contributions statement}

A.D. conceived the study, carried out the implementation and analysis, and wrote the first draft of the manuscript. D.A., P.N., and O.W. supervised the work, contributed to the interpretation of the results, and reviewed the manuscript.

\section{Data Availability Statement}\label{sec:data_availability_statement}
The code, data, and figures supporting this article are available in the following GitHub repository: \url{https://github.com/antoinedoize/rainfall_occurrence_BMCD}. The repository also contains code to reproduce the results presented in this article and to fit a BMCD to other rainfall datasets or other types of data.

The implementation uses map tiles from \texttt{plotly} based on data by 
\href{https://www.openstreetmap.org/copyright}{OpenStreetMap contributors}, 
available under the Open Database Licence (ODbL).

\section{Acknowledgments}
This work has been supported by the chair Geolearning, funded by ANDRA, BNP Paribas, CCR and the SCOR Foundation for Science. 
Part of Naveau’s research work was supported by the   Agence Nationale de la Recherche via:  the SICIM and SHARE PEPR Maths-Vives project (France 2030 ANR-24-EXMA-0008), EXSTA grant (ANR-23-CE40-0009-01), PORC-EPIC, the PEPR TRACCS program  (PC4 EXTENDING, ANR-22-EXTR-0005), and  the PEPR   IRIMONT (France 2030 ANR-22-EXIR-0003).

During the preparation of this work, the submitting author used the natural language processing model Claude Opus 4.7, accessed during April 2026, to correct typographical, grammatical, and syntactic errors, and to assist with the refactoring of the Python code used for the analyses. After using this tool, the authors reviewed and edited the content as needed and take full responsibility for the content of the publication.
\newpage

\bibliographystyle{abbrvnat}
\bibliography{references}





\begin{appendices}

\newpage

\section{Proofs}\label{sec:proofs}

\subsection{Preliminary results}\label{subsec:preliminary_results}

Here we state two classical results which can be found in any reference to alternating renewal models, such as \citet[theorems~2.3-2.4]{semi_markov_barbu}.

\begin{theorem}[Strong law of large numbers and elementary renewal theorem]\label{theorem:law_large_number_elemementary_renewal_theorem_N_n}
Suppose $\mathbb{E}[\tau] < \infty$. Then,
\begin{equation}\label{eq:law_large_number_N_n}
\lim_{n \to \infty} \frac{N(n)}{n} \;=\; \frac{1}{\mathbb{E}[\tau]} \quad \text{a.s.}
\end{equation}
We also have,
\[
\lim_{n \to \infty} \frac{\mathbb{E}\left[ N_n \right]}{n} \;=\; \frac{1}{\mathbb{E}[\tau]}.
\]
The first result is a strong law of large numbers for $(N_n)$, and the second result is commonly known as the elementary renewal theorem.
\end{theorem}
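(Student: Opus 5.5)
The statement to prove is the classical renewal theorem for the counting sequence $(N_n)$ of the alternating renewal chain $(T_k)$ from \eqref{eq:alternating_renewal_generated_by_spell_durations}. The cycle durations $(\tau_k)_{k\ge1}$ are i.i.d., a.s. finite, and satisfy $\tau_k\ge 2$. We also assume $\mathbb{E}[\tau]<\infty$.

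\textbf{Almost sure limit.} I would use the sandwich argument. By the definition \eqref{eq:renewal_counting_sequence} of $N_n$ as $\max\{k:T_k\le n\}$, we have
\[
T_{N_n}\le n< T_{N_n+1}.
\]
Since each $\tau_k$ is finite a.s., every $T_k$ is finite, and therefore $N_n\to\infty$ a.s. as $n\to\infty$. Kolmogorov's strong law of large numbers gives $T_k/k\to\mathbb{E}[\tau]$ a.s. Along the random subsequence $k=N_n\to\infty$ this yields $T_{N_n}/N_n\to\mathbb{E}[\tau]$, and likewise
\[
\frac{T_{N_n+1}}{N_n+1}\cdot\frac{N_n+1}{N_n}\to\mathbb{E}[\tau].
\]
Dividing the sandwich by $N_n$ gives
\[
\frac{T_{N_n}}{N_n}\le \frac{n}{N_n}<\frac{T_{N_n+1}}{N_n},
\]
so $n/N_n\to\mathbb{E}[\tau]$ a.s. Inverting gives the first claim. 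The case $N_n=0$ only occurs for finitely many $n$ and is harmless.

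\textbf{Elementary renewal theorem.} The lower bound comes from Wald's identity. The index $N_n+1$ is a stopping time for the filtration generated by $(\tau_k)$, because $\{N_n+1\le k\}=\{T_k>n\}$. Wald's identity then gives
\[
\mathbb{E}[T_{N_n+1}]=\mathbb{E}[\tau]\,(\mathbb{E}[N_n]+1)>n .
\]
Hence $\liminf_n \mathbb{E}[N_n]/n\ge 1/\mathbb{E}[\tau]$. Here $\mathbb{E}[N_n]\le n$ is finite automatically, since $\tau_k\ge 2$ forces $N_n\le n/2$.

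\textbf{Upper bound.} Alternatively, note that $N_n/n\le 1/2$ is bounded. Since $N_n/n\to 1/\mathbb{E}[\tau]$ a.s. by the first part, dominated convergence gives the second claim immediately. I would present this dominated-convergence route as the main proof and mention Wald only as the standard alternative. In the BMCD setting, boundedness of $N_n/n$ follows from each cycle lasting at least two steps, one dry and one wet. This removes the usual obstacle of the truncation argument $\tau\wedge M$ needed for the upper bound in general renewal theory.

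The only delicate point is justifying that the strong law applies along the random index $N_n$. This follows because $N_n\to\infty$ a.s. and convergence holds pathwise on a probability-one event.
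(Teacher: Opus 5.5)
Your proof is correct, and it takes a different route from the paper only in the sense that the paper gives no proof of its own. It records the statement in Appendix~\ref{subsec:preliminary_results} as a classical fact and defers to the textbook of Barbu and Limnios (Theorems 2.3--2.4). It relies implicitly on Appendix~\ref{subsec:spell_durations_iid} for the fact that the cycle lengths $(\tau_k)$ are i.i.d.\ and $\mathbb{N}^*$-valued, so that the textbook's hypotheses hold.

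Your argument is a self-contained version of the standard one. The sandwich $T_{N_n}\le n<T_{N_n+1}$, combined with Kolmogorov's strong law evaluated pathwise along $N_n\to\infty$, gives the almost sure limit. The elementary renewal theorem then follows by bounded convergence, since $N_n/n$ is uniformly bounded. The citation is shorter but leaves the reader to check the hypotheses. Your version makes explicit that in discrete time the elementary renewal theorem is an immediate corollary of the strong law, with no uniform-integrability step.

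Two small remarks. First, the bound does not depend on cycles lasting two steps. Any discrete renewal chain with waiting times in $\mathbb{N}^*$ has $T_k\ge k$, hence $N_n\le n$, which already suffices for bounded convergence. The truncation $\tau\wedge M$ is only needed in continuous time, where inter-arrival times can be arbitrarily small. So the feature that helps here is discreteness, not the dry/wet structure of the BMCD. Second, once you invoke bounded convergence, the Wald lower bound is redundant. Your ``Upper bound'' paragraph is really the whole proof of the second claim, and the Wald paragraph is best labelled as an aside.
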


\begin{theorem}[Central limit theorem for $N(n)$]\label{theorem:central_limit_N_n}
Suppose $\mathrm{Var}(\tau) < \infty$. Then 
\[
\frac{N(n) - n/\mathbb{E}[\tau]}{\sqrt{n}\sqrt{\mathrm{Var}(\tau) / \mathbb{E}[\tau]^3}} 
\;\;\xrightarrow[n \to \infty]{\;\;d\;\;}\;\; \mathcal{N}(0,1).
\]
\end{theorem}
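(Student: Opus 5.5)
The statement to prove is the last one displayed: the central limit theorem for $N(n)$ in Theorem~\ref{theorem:central_limit_N_n}. The paper presents it as a classical renewal-theory result. My plan is the standard inversion argument: translate events about the counting sequence into events about the partial sums $T_k=\sum_{i=1}^k\tau_i$, then apply the ordinary i.i.d. central limit theorem to the cycle durations. This works because the $\tau_k$ are i.i.d. Write $\mu=\mathbb{E}[\tau]$ and $s^2=\mathrm{Var}(\tau)$, and assume $s>0$; the degenerate case $s=0$ has $N_n=\lfloor n/\mu\rfloor$ and is trivial or excluded.

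\textbf{Step 1 (duality).} By definition~\eqref{eq:renewal_counting_sequence}, $N_n=\max\{k: T_k\le n\}$. Since the $\tau_i\ge 2$ are positive, $T_k$ is increasing, so
\[
\{N_n\ge k\}=\{T_k\le n\}\qquad\text{for every integer } k\ge 0 .
\]

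\textbf{Step 2 (choice of $k$).} Fix $x\in\mathbb{R}$ and set
\[
k_n=\Big\lceil n/\mu + x\sqrt{n s^2/\mu^3}\,\Big\rceil ,
\]
which tends to infinity. By Step 1,
\[
\mathbb{P}\Big(\tfrac{N_n-n/\mu}{\sqrt{n s^2/\mu^3}}\ge x\Big)=\mathbb{P}(N_n\ge k_n)=\mathbb{P}(T_{k_n}\le n)=\mathbb{P}\Big(\tfrac{T_{k_n}-k_n\mu}{s\sqrt{k_n}}\le \tfrac{n-k_n\mu}{s\sqrt{k_n}}\Big).
\]
The ceiling only adds an $O(1)$ error, which is negligible at scale $\sqrt n$.

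\textbf{Step 3 (asymptotics of the threshold).} From $k_n\mu=n+x\sqrt{n s^2/\mu}+O(1)$ and $k_n\sim n/\mu$, I compute
\[
\frac{n-k_n\mu}{s\sqrt{k_n}}\longrightarrow -x .
\]

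\textbf{Step 4 (CLT).} The Lindeberg–Lévy CLT gives $(T_k-k\mu)/(s\sqrt k)\Rightarrow\mathcal{N}(0,1)$ as $k\to\infty$, along the deterministic sequence $k_n$. The limiting cdf $\Phi$ is continuous, so Pólya's theorem makes the convergence uniform. Uniformity lets me replace the moving threshold by its limit $-x$, and the probability in Step 2 tends to $\Phi(-x)=1-\Phi(x)$. This holds for every $x$, which is the claimed convergence to $\mathcal{N}(0,1)$.

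I expect the main obstacle to be the bookkeeping in Steps 2–4: handling the ceiling/integer rounding and justifying passage to the limit with a threshold that depends on $n$. Pólya's uniform convergence disposes of both.
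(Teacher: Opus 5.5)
Your proof is correct. The paper gives no argument of its own and quotes this as a classical renewal result (Barbu and Limnios, Theorems 2.3--2.4). Your argument is the standard proof behind that citation: the duality $\{N_n\ge k\}=\{T_k\le n\}$, the i.i.d.\ central limit theorem for $T_{k_n}$ at a deterministic index $k_n$, and P\'olya's uniform convergence to handle the moving threshold. Excluding $\mathrm{Var}(\tau)=0$ is also right, since the normalisation in the statement is then undefined.
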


\subsection{Proof of proposition~\ref{prop:reward_convergence}}\label{proof-reward_convergence}
\begin{proof}
Note that $T_{N_n}$ is the start time of the $N_n$-th “dry-rain” cycle, in particular $T_{N_n}\le n<T_{N_n+1}$, so $T_{N_n}$ is the last renewal time before $n$. This random variable is convenient to link the Markov chain representation to the alternating renewal chain representation.

Let us first consider the almost sure convergence. The discrete version of \cite[proposition~3.4.1]{resnick1992adventures} yields
\[
\lim_{n\to\infty}\frac{1}{n}\sum_{k=0}^{T_{N_n}-1} w(R_k,D_k)
= \frac{\rho}{\mathbb{E}[\tau]}
\quad\text{a.s.}
\]
Write
\[
\frac{1}{n}\sum_{k=0}^{n} w(R_k,D_k)
= \frac{1}{n}\sum_{k=0}^{T_{N_n}-1} w(R_k,D_k)
\;+\; \frac{1}{n}\sum_{k=T_{N_n}}^{n}w(R_k,D_k).
\]
Let $A_n:=n-T_{N_n}$ be the backward recurrence time. Then
\[
\frac{1}{n}\sum_{k=T_{N_n}}^{n}w(R_k,D_k)
= \frac{1}{n}\sum_{k=0}^{A_n}w(R_{k + T_{N_n}},D_{k + T_{N_n}}).
\]
Moreover, almost surely,
\[
A_n < \tau_{N_n} := T_{N_{n+1}}-T_{N_n},
\]
and since $w\ge 0$,
\[
0
\leq \frac{1}{n}\sum_{k=0}^{A_n}w(R_{k + T_{N_n}},D_{k + T_{N_n}})
\leq \frac{1}{n}\sum_{k=0}^{\tau_{N_n}-1}w(R_{k + T_{N_n}},D_{k + T_{N_n}}),
\quad\text{a.s.}
\]
The right-hand side is finite a.s. by hypothesis, and
\[
\sum_{k=0}^{\tau_{N_n}-1}w(R_{k + T_{N_n}},D_{k + T_{N_n}})
\stackrel{d}{=}
\sum_{k=0}^{\tau-1}w(R_k,D_k),
\]
which has finite expectation by assumption. Hence
\[
\lim_{n\to\infty}\frac{1}{n}\sum_{k=0}^{\tau_{N_n}-1}w(R_{k + T_{N_n}},D_{k + T_{N_n}})
=0,\quad\text{a.s.}
\]
and therefore
\[
\lim_{n\to\infty}\frac{1}{n}\sum_{k=0}^{A_n}w(R_{k + T_{N_n}},D_{k + T_{N_n}})
=0,\quad\text{a.s.}
\]
Combining these limits gives
\[
\lim_{n\to\infty}\frac{1}{n}\sum_{k=0}^{n} w(R_k,D_k)
=
\lim_{n\to\infty}\frac{1}{n}\sum_{k=0}^{T_{N_n}-1} w(R_k,D_k)
=
\frac{\rho}{\mathbb{E}[\tau]},
\quad\text{a.s.}
\]

Now let us consider the asymptotic normality. Define the centered random variables
\[
Y_i := \sum_{k=T_{N_i}}^{T_{N_{i+1}}-1} w(R_k, D_k) - \rho,
\qquad i\ge 0,
\]
so that $(Y_i)_{i\ge 0}$ are i.i.d.\ with $\mathbb{E}[Y_i]=0$ and
$\mathrm{Var}(Y_i)=\nu^2<+\infty$ by assumption. Then
\[
\sum_{i=0}^{N_n} Y_i
=
\sum_{k=0}^{T_{N_n}-1} w(R_k, D_k) - N_n\rho.
\]
Recall that $\frac{N_n}{n}\to \frac{1}{\mathbb{E}[\tau]}$ almost surely, hence also in probability. By Anscombe's theorem \citep[theorem 7.3.2]{chung2000course},
\[
\frac{\sum_{i=0}^{N_n}Y_i}{\sqrt{n\,\nu/\mathbb{E}[\tau]}}
=
\frac{\sum_{k=0}^{T_{N_n}-1} w(R_k, D_k) - N_n\rho}{\sqrt{n\,\nu/\mathbb{E}[\tau]}}
\;\xrightarrow[n\to\infty]{\ d\ }\;
\mathcal{N}(0,1).
\]
It remains to replace $\sum_{k=0}^{T_{N_n}-1} w(R_k,D_k)$ by $\sum_{k=0}^{n} w(R_k,D_k)$.
Using again $A_n=n-T_{N_n}$ and the positivity of $w$,
\[
0
\leq \sum_{k=T_{N_n}}^{n} w(R_k,D_k)
=
\sum_{k=0}^{A_n} w(R_{k+T_{N_n}},D_{k+T_{N_n}})
\leq
\sum_{k=0}^{\tau_{N_n}-1} w(R_{k+T_{N_n}},D_{k+T_{N_n}}),
\quad\text{a.s.}
\]
As above,
\[
\sum_{k=0}^{\tau_{N_n}-1} w(R_{k+T_{N_n}},D_{k+T_{N_n}})
\stackrel{d}{=}
\sum_{k=0}^{\tau-1} w(R_k,D_k),
\]
so in particular it is finite a.s. and has finite expectation. Therefore,
\[
\lim_{n\to\infty}\frac{1}{\sqrt{n}}
\sum_{k=T_{N_n}}^{n} w(R_k,D_k)=0,
\quad\text{a.s.}
\]
and Slutsky theorem \citep[theorem 11.4]{gut2006probability} yields
\[
\frac{\sum_{k=0}^{n} w(R_k, D_k) -N_n\rho}{\sqrt{n\,\nu^2/\mathbb{E}[\tau]}}
\;\xrightarrow[n\to\infty]{\ d\ }\;
\mathcal{N}(0,1),
\]
which is the stated asymptotic normality.
\end{proof}

The result could be extended to more general functions $w:(\{0,1\}\times\mathbb{N}^*)^\infty \to \mathbb{R}_+$, with $(\{0,1\}\times\mathbb{N}^*)^\infty := \{ (r_k,d_k)_{k \leq n}, \; (r_k,d_k) \in \{0,1\}\times\mathbb{N}^* , \; n \in \mathbb{N}^* \}$ the set of finite sequences takings values in $E$, but we keep this one for the sake of simplicity.

\subsection{Proof of proposition~\ref{prop:cltintermediate}}\label{proof:prop_cltintermediate}

\begin{proof}
The result follows from an application of Proposition
\ref{prop:reward_convergence}.

Let us set an integer $d_{\max}>2$ and define, for
$\boldsymbol{\Lambda}=(\lambda_d)_{2\le d\le d_{\max}}\in\mathbb{R}^{d_{\max}-1}$,
the function $w^{(r)}_{\boldsymbol{\Lambda}}(R_k,D_k)
=
\sum_{d=2}^{d_{\max}}\lambda_d\,\mathbbm{1}(R_k=r,\;D_k=d)
.$
Then
\[
\sum_{k=0}^{\tau-1} w^{(r)}_{\boldsymbol{\Lambda}}(R_k,D_k)
=
\sum_{d=2}^{d_{\max}}\lambda_d
\sum_{k=0}^{\tau-1}\mathbbm{1}(R_k=r,\;D_k=d)
=
\sum_{d=2}^{d_{\max}}\lambda_d\,\mathbbm{1}(\tau^{(r)}>d-1),
\]
as $\sum_{k=0}^{\tau-1}\mathbbm{1}(R_k=r,\;D_k=d)$ equals $1$ if and only if the r-type spell is longer or equal to $d$. Therefore
\[
\mathbb{E}\!\left[\sum_{k=0}^{\tau-1} w^{(r)}_{\boldsymbol{\Lambda}}(R_k,D_k)\right]
= \sum_{k=2}^{d_{\max}} \lambda_d \mathbb{E}\!\left[\,\mathbbm{1}(\tau^{(r)}>d-1)\right]
=
\sum_{d=2}^{d_{\max}}\lambda_d\,\overline F_{\tau^{(r)}}(d-1).
\]
Applying proposition~\ref{prop:reward_convergence} with $w = w^{(r)}_{\boldsymbol{\Lambda}}$ yields

\[
\frac{\sum_{k=0}^{n} \left[ w^{(r)}_{\boldsymbol{\Lambda}}(R_k, D_k)\right] - N_n \left[  \sum_{d=2}^{d_{\max}}\lambda_d\,\overline F_{\tau^{(r)}}(d-1) \right] }{\sqrt{n\,\nu^2(\boldsymbol{\Lambda})/\mathbb{E}[\tau]}} \;\xrightarrow[n\to\infty]{d}\; \mathcal{N}(0,1).
\]
Noting that $\sum_{k=0}^{n} \mathbbm{1}(R_k=r,\,D_k=d) = N_{n}^{(r)}(d)$,

\begin{equation}\label{eq:pre_cramer_wold_optionA}
\frac{
\sum_{d=2}^{d_{\max}}\lambda_d
\Big(N_{n}^{(r)}(d) - N_n\,\overline F_{\tau^{(r)}}(d-1)\Big)
}{
\sqrt{n\,\nu^2(\boldsymbol{\Lambda})/\mathbb{E}[\tau]}
}
\;\xrightarrow[n\to\infty]{d}\;
\mathcal{N}(0,1),
\end{equation}
where $
\nu^2(\boldsymbol{\Lambda})
=
\mathrm{Var}\!\left[
\sum_{k=0}^{\tau-1} w^{(r)}_{\boldsymbol{\Lambda}}(R_k,D_k)
\right]$. Let us denote just for this variance calculation the vector $\mathbf{X}=(X_d)_{2\le d\le d_{\max}}^{\mathsf T}$, defined by
$X_d=\sum_{k=0}^{\tau-1}\mathbbm{1}(R_k=r,\;D_k=d),
\qquad d=2,\dots,d_{\max}$.
Then
\[
\nu^2(\boldsymbol{\Lambda})
=
\mathrm{Var}\!\left(\boldsymbol{\Lambda}^{\mathsf T}\mathbf{X}\right)
=
\boldsymbol{\Lambda}^{\mathsf T}\,\Sigma\,\boldsymbol{\Lambda},
\]
where $\Sigma=\big(\Sigma_{i,j}\big)_{1\le i,j\le d_{\max}-1}$ is the covariance matrix of
$\mathbf{X}$.
For $1\le i,j\le d_{\max}-1$,
\begin{align*}
\Sigma_{i,j}
&=
\mathrm{Cov}\!\left(
\sum_{k=0}^{\tau-1}\mathbbm{1}(R_k=r,\;D_k=i+1),\,
\sum_{k=0}^{\tau-1}\mathbbm{1}(R_k=r,\;D_k=j+1)
\right) \\
&=
\mathrm{Cov}\!\left(
\mathbbm{1}(\tau^{(r)}>i),\,
\mathbbm{1}(\tau^{(r)}>j)
\right) \\
&=
\overline F_{\tau^{(r)}}\!\big(\max(i,j)\big)
-
\overline F_{\tau^{(r)}}(i)\,
\overline F_{\tau^{(r)}}(j).
\end{align*}
Rewriting \eqref{eq:pre_cramer_wold_optionA}, we obtain for any
$\boldsymbol{\Lambda}\in\mathbb{R}^{d_{\max}-1}$,
\[
\sum_{d=2}^{d_{\max}}\lambda_d
\sqrt{\frac{n}{\mathbb{E}[\tau]}}
\left(
\frac{N_{n}^{(r)}(d)}{N_n}
-
\overline F_{\tau^{(r)}}(d-1)
\right)
\;\xrightarrow[n\to\infty]{d}\;
\mathcal{N}\!\left(0,\boldsymbol{\Lambda}^{\mathsf T}\Sigma\boldsymbol{\Lambda}\right).
\]
Since this holds for every $\boldsymbol{\Lambda}\in\mathbb{R}^{d_{\max}-1}$,
the Cram\'er-Wold theorem yields
\[
\sqrt{\frac{n}{\mathbb{E}[\tau]}}
\Bigg(
\frac{N_{n}^{(r)}(d)}{N_n}
-
\overline F_{\tau^{(r)}}(d-1)
\Bigg)_{2\le d\le d_{\max}}
\;\xrightarrow[n\to\infty]{d}\;
\mathcal{N}_{d_{\max}-1}(\mathbf{0},\Sigma).
\]
Finally, plugging in
\eqref{theorem:law_large_number_elemementary_renewal_theorem_N_n}
and applying Slutsky's theorem \citep[theorem 11.4]{gut2006probability} concludes the proof.
\end{proof}

\subsection{Proof of proposition~\ref{prop:gof_q}}\label{proof:prop_gof_q}

\begin{proof} \label{proof-gof}

By proposition~\ref{prop:cltintermediate}, under $H_0^{(r)}$ the vector
\[
\left(\frac{N_{n}^{(r)}(d)}{N_n}-\overline F_{\theta_0}(d-1)\right)_{2\le d\le d_{\max}}
\]
satisfies a multivariate CLT with asymptotic covariance matrix $\Sigma_{\theta_0}$.
For $d=1,\dots,d_{\max}-1$, we have:
\[
q^{(r)}_1 \;=\; 1-\overline F_{\theta_0}(1),
\qquad
q^{(r)}_d \;=\; 1-\frac{\overline F_{\theta_0}(d)}{\overline F_{\theta_0}(d-1)},\quad d\ge 2,
\]
and similarly the empirical estimators satisfy
\[
\widehat q^{(r)}_{1,\mathrm{emp}}
\;=\;
1-\frac{N_{n}^{(r)}(2)}{N_n},
\qquad
\widehat q^{(r)}_{d,\mathrm{emp}}
\;=\;
1-\frac{N_{n}^{(r)}(d+1)/N_n}{N_{n}^{(r)}(d)/N_n},\quad d\ge 2,
\]
since $N_{\ge 1}^{(r)}=N_n$.
Define the smooth mapping $\varphi:\mathbb{R}^{d_{\max}-1}\to\mathbb{R}^{d_{\max}-1}$ by
\[
\varphi(x_2,\dots,x_{d_{\max}})
=
\Bigl(1-x_2,\ \ (1-x_{d+1}/x_d)_{d=2,\dots,d_{\max}-1}\Bigr).
\]
Applying the multivariate $\Delta$-method to the CLT from proposition~\ref{prop:cltintermediate} yields
\[
\sqrt{\frac{n}{\mathbb{E}[\tau]}}
\left(\widehat{\mathbf{q}}^{(r)}_{\mathrm{emp}}-\mathbf{q}_{\theta_0}^{(r)}\right)_{|1:d_{\max}-1}
\;\xRightarrow{d}\;
\mathcal{N}_{d_{\max}-1}\!\left(\mathbf{0},\;\mathbf{T}_{\theta_0}\Sigma_{\theta_0}\mathbf{T}_{\theta_0}^{\mathsf T}\right),
\]
where the Jacobian matrix is exactly $\mathbf{T}_{\theta_0}$ as defined in the proposition.
If $\mathbf{T}_{\theta_0}\Sigma_{\theta_0}\mathbf{T}_{\theta_0}^{\mathsf T}$ is nonsingular, the standard Gaussian
quadratic-form result implies that
\[
\frac{n}{\mathbb{E}[\tau]}\,
\mathbf{\Delta}^{\mathsf T}
\left(\mathbf{T}_{\theta_0}\Sigma_{\theta_0}\mathbf{T}_{\theta_0}^{\mathsf T}\right)^{-1}
\mathbf{\Delta}
\;\xRightarrow{d}\;
\chi^2_{ d_{\max}-1}.
\]
with $\mathbf{\Delta} := \bigl(\widehat {\mathbf{q}}_{\mathrm{emp}|1:d_{\max}-1}^{(r)}-\widehat {\mathbf{q}}_{\theta_0 |1:d_{\max}-1}^{(r)}\bigr)$
Using Slutsky theorem \citep[theorem 11.4]{gut2006probability}, along with theorem~\ref{theorem:central_limit_N_n}, we get
\[
\mathcal{Q}_{N_n} :=N_n\,
\mathbf{\Delta}^{\mathsf T}
\left(\mathbf{T}_{\theta_0}\Sigma_{\theta_0}\mathbf{T}^{\mathsf T}_{\theta_0}\right)^{-1}
\mathbf{\Delta}
\;\stackrel{d}{\longrightarrow}\;
\chi^2_{ d_{\max}-1}.
\]

\end{proof}

\section{Additional results}\label{sec:supplementary_material}

\subsection{Finite spell duration condition}\label{subsec:spell_durations_finiteness}

For a BMCD $(R_n,D_n)$ with parameters $(\mathbf q^{(0)},\mathbf q^{(1)})$, we will always consider first dry and wet spell durations which are a.s.\ finite. This happens if and only if the following condition on the parameters is fulfilled:
$$\sum_{d=1}^{\infty} q^{(0)}_d = +\infty \;\;\text{and}\;\; \sum_{d=1}^{\infty} q^{(1)}_d = +\infty.$$
Indeed, as \[
\mathbb P(\tau^{(0)}_1 \ge K)
\;=\;
\prod_{d=1}^{K} (1-q^{(0)}_d), \text{ by independence.}
\]
The product $  ( \prod_{d=1}^{K} (1-q^{(0)}_d)  )_{K \geq 1} $ has factors in $(0,1)$, so it is decreasing and lower bounded: it is converging either towards 0, or towards $l \in (0,1)$. We have:
\[
\mathbb P(\tau^{(0)}_1=\infty)
\;=\;
\lim_{K\to\infty}\mathbb P(\tau^{(0)}_1\ge K)
\;=\;
\prod_{d=1}^{\infty} (1-q^{(0)}_d).
\]
Moreover,
\[
\prod_{d=1}^{\infty} (1-q^{(0)}_d) = l \Leftrightarrow \exp( \sum_{d=1}^{\infty}\log(1-q^{(0)}_d)) = l \Leftrightarrow \sum_{d=1}^{\infty}\log(1-q^{(0)}_d) \text{ converges,}
\]
and thus $(q^{(0)}_d)_{d \geq 1}$ being necessarily converging toward $0$. Then, applying the limit comparison test \[
\prod_{d=1}^{\infty} (1-q^{(0)}_d) = l \;\Leftrightarrow\; \sum_{d=1}^{\infty}q^{(0)}_d \text{ converges.}
\]
Therefore $\tau^{(0)}_1<\infty$ a.s. if and only if $\sum_{d=1}^{\infty} q^{(0)}_d \;=\; \infty$. With the same proof applied to the time shifted  Markov chain $(R_{n+\tau^{(0)}_1},D_{n+\tau^{(0)}_n})_{n \geq 0}$ we have that $\tau^{(1)}_1<\infty$ a.s. if and only if $\sum_{d=1}^{\infty} q^{(1)}_d \;=\; +\infty$.

\subsection{Spell duration i.i.d. property}\label{subsec:spell_durations_iid}

For all $r\in\{0,1\}$ and all $k\ge1$, the random variables $\{\tau^{(r)}_k\}$ are mutually independent. Moreover, $(\tau^{(0)}_k)_{k\ge1}$ are i.i.d.\ with common distribution denoted $\tau^{(0)}\stackrel d=\tau^{(0)}_1$, and $(\tau^{(1)}_k)_{k\ge1}$ are i.i.d.\ with common distribution denoted $\tau^{(1)}\stackrel d=\tau^{(1)}_1$. Consequently, the cycle durations $(\tau_k)_{k\ge1}$ are i.i.d.\ with common distribution denoted $\tau\stackrel d=\tau_1$.

This is a direct consequence of the strong Markov property at stopping times \[
\tau^{(0)}_1,\; \tau^{(0)}_1+\tau^{(1)}_1,\; \tau^{(0)}_1+\tau^{(1)}_1+\tau^{(0)}_2,\; \tau^{(0)}_1+\tau^{(1)}_1+\tau^{(0)}_2+\tau^{(1)}_2,\;\dots.
\]

\subsection{EM algorithm}\label{subsec:em_algorithm_mixture_geometric}
The EM algorithm is a simple and efficient way to estimate the parameters of mixture models such as the distribution described in equation~\eqref{eq:mixgeom_pmf_wet} \citep{mengersen2011mixtures}. Let $d_1,\dots,d_n$ denote an independent sample from the distribution \eqref{eq:mixgeom_pmf_wet}. Introduce latent indicators $z_k\in\{0,1\}$ such that, conditionally on $z_k=1$, $d_k$ follows a geometric distribution with parameter $p_1$, whereas, conditionally on $z_k=0$, $d_k$ follows a geometric distribution with parameter $p_2$. Then
\[
\mathbb{P}(z_k=1)=\pi,\qquad \mathbb{P}(z_k=0)=1-\pi,
\]
and the complete-data log-likelihood is, up to an additive constant,
\[
\begin{array}{rcl}
\ell_c(\pi,p_1,p_2)
=  & \sum_{k=1}^n z_k & \Big[\log\pi+\log p_1+(d_k-1)\log(1-p_1)\Big] \\
+ & \sum_{k=1}^n (1-z_k) &\Big[\log(1-\pi)+\log p_2+(d_k-1)\log(1-p_2)\Big].
\end{array}
\]
At iteration $m$, given current values $\pi^{(m)},p_1^{(m)},p_2^{(m)}$, the E-step replaces $z_k$ by its conditional expectation
\[
w_k^{(m)}
=\mathbb{E}(z_k\mid d_k)
=\frac{\pi^{(m)} p_1^{(m)}(1-p_1^{(m)})^{d_k-1}}
{\pi^{(m)} p_1^{(m)}(1-p_1^{(m)})^{d_k-1}
+(1-\pi^{(m)}) p_2^{(m)}(1-p_2^{(m)})^{d_k-1}}.
\]
The M-step maximizes the conditional expectation of $\ell_c$, which yields the closed-form updates
\[
\pi^{(m+1)}=\frac{1}{n}\sum_{k=1}^n w_k^{(m)},
\]
\[
p_1^{(m+1)}
=\frac{\sum_{k=1}^n w_k^{(m)}}
{\sum_{k=1}^n w_k^{(m)} d_k},
\qquad
p_2^{(m+1)}
=\frac{\sum_{k=1}^n \big(1-w_k^{(m)}\big)}
{\sum_{k=1}^n \big(1-w_k^{(m)}\big) d_k}.
\]
The E-step and M-step are iterated until convergence (relative increment of log-likelihood lower than $10^{-6}$). As usual for finite mixtures, random starting values are considered in order to reduce the risk of convergence to a local maximum.

\subsection{Autocorrelation of spell durations}\label{subsec:autocorrelation_bivariate_spell_duration}

Below is detailed a way of checking independence of consecutive spell durations of a BMCD explained in ~\ref{subsec:spell_durations_iid}. Rather than attempting a fully nonparametric test of independence, we suggest using the sample autocorrelation function as a simple, interpretable diagnostic for stationary time series. See \citet[Section 4.4]{Lutkepohl2005NewIntroduction} for an introduction to correlation matrix. Because our data consist of several disjoint series (one per year and season), we adapt this correlation matrix calculation by pooling valid within-series pairs at each lag. We detail how to define this matrix for a given season (not denoted here for readability). Let us reorder in this Section the vector of all spell durations of a given season by blocs of sequences for each year $y$. For each year $y = 1 \dots Y$ denote those reordered spell duration couples $\mathbf{V}_{y,k}=(\tau^{(0)}_{y,k},\tau^{(1)}_{y,k})^\top$, for $k=1,\dots,C_y$ with $C_y$ the number of couples for year $y$. Define
\[
C_{\mathrm{tot}}=\sum_{y=1}^Y C_y,\quad
\overline{\mathbf{V}}=\frac1{C_{\mathrm{tot}}}\sum_{y=1}^Y\sum_{k=1}^{C_y}\mathbf{V}_{y,k},\quad
C_\ell=\sum_{y=1}^Y\max(C_y-\ell,0),
\]
\[
\widehat{\mathbf{\Gamma}}(\ell)=\frac1{C_\ell}\sum_{y=1}^Y\sum_{k=1}^{C_y-\ell}(\mathbf{V}_{y,k}-\overline{\mathbf{V}})(\mathbf{V}_{y,k+\ell}-\overline{\mathbf{V}})^\top,
\quad \ell=0,\dots,L.\]
Then,
\begin{equation}\label{autocorrelation_matrix_several_time_series}
\widehat{\mathbf{R}}(\ell)=\widehat{\mathbf{D}}^{-1/2}\,\widehat{\mathbf{\Gamma}}(\ell)\,\widehat{\mathbf{D}}^{-1/2},
\end{equation}
where $\widehat{\mathbf{D}}=\mathrm{diag}(\widehat{\mathbf{\Gamma}}(0))$, is the autocorrelation matrix for lag $ \ell=0,\dots,L$. A very classical idea is to plot each of the correlation matrix elements for a given lag with the confidence bands $\ell \to \frac{2}{\sqrt{C_\ell}}$. This matrix is not symmetric in general.

\subsection{General class of extended Generalized Pareto Distribution (eGPD)}\label{subsec:general_class_egpd}

We now show that the two spell-duration models used in the paper are both obtained as special cases of the same general eGPD construction. For $z\in\mathbb{R}$, define
\begin{equation}\label{eq:generalized_pareto_cdf}
H_\xi(z)=
\begin{cases}
1-(1+\xi z)^{-1/\xi}, & (\xi>0, z\ge0)\ \text{or}\ (\xi<0,0<z<-1/\xi),\\[2pt]
1-e^{-z}, & \xi=0,\ \ z\ge 0,\\[2pt]
0, & \text{otherwise.}
\end{cases}    
\end{equation}

A random variable $X$ is said to belong to the general class of eGPD if we can find a cumulative distribution function $B(\cdot)$ on the unit interval with bounded density such that: 
$$X = \sigma H^{-1}_{\xi} \left[\left( B^{-1} (U) \right)^{1/\kappa}\right].$$ with $U$ a standard uniform distributed random variable, $\sigma > 0, \; \kappa > 0$, and $H^{-1}_{\xi}$ denoting the generalized inverse distribution function of $H_{\xi}$, such that for any $p \in [0,1], \; H^{-1}_{\xi} = \inf\{x \in \mathbb{R}: H_{\xi} \ge p \}$.
Thus, for any $x \in \mathbb{R}$ we have $$F_X(x) = \mathbb{P} \left( X \le x \right) = B \left( H^{\kappa}_{\xi} (\frac{x}{\sigma}) \right)$$.

In order to link this general expression to dry spell duration distribution from Section~\ref{subsec:spell_duration_distribution_specification}, consider
\[
B(u)=u \quad\text{for }u\in[0,1]
\]
yields $B^{-1}(u)=u$ and therefore
\[
F_X(x)=B\!\left(H_\xi^\kappa\!\left(\frac{x}{\sigma}\right)\right)
=H_\xi^\kappa\!\left(\frac{x}{\sigma}\right)
=F_{\kappa,\sigma,\xi}(x),
\]
which is the distribution of \eqref{eq:type-1-eGPD}. Then, by using
\[
B(u)=
\begin{cases}
f_1, & u=0,\\
f_1+(1-f_1)u, & 0<u\le 1,
\end{cases}
\]
we get:
\[
F_X(x)=B\!\left(H_\xi^\kappa\!\left(\frac{x}{\sigma}\right)\right)
=
\begin{cases}
f_1, & x=0,\\
f_1+(1-f_1)\,F_{\kappa,\sigma,\xi}(x), & x>0,
\end{cases}
\]
and using $\tau^{(0)} \coloneqq 1 + \lceil X \rceil$, we have,
\begin{equation*}
\mathbb{P}\big(\tau^{(0)}=d\big)=
\begin{cases}
f_1, & d=1,\\
(1-f_1)\,\Big[F_{\kappa,\sigma,\xi}(d-1)-F_{\kappa,\sigma,\xi}(d-2)\Big], & d\ge 2.
\end{cases}
\end{equation*}

We now turn our attention to the wet spell duration distribution. Replacing $\sigma$ by $\sigma_1$, and taking $B(u) = \pi B_1(u) + (1-\pi)B_2(u)$ leads to $$F_X(x)=\pi B_1\left(H^{\kappa}_{\xi} (\frac{x}{\sigma_1})\right) + (1-\pi)B_2\left(H^{\kappa}_{\xi} (\frac{x}{\sigma_1})\right).$$
Then using $\xi=0,\; \kappa=1,\;B_1(u)=u, \;B_2(u)=1-\left( 1-u\right)^{\frac{\sigma_1}{\sigma_2}}$  gives:
$$F_X(x)=\pi \left( 1 - \exp(-\frac{x}{\sigma_1}) \right) + (1-\pi) \left( 1 - \exp(-\frac{x}{\sigma_2}) \right).$$ Use $\tau^{(1)} \coloneqq \lceil X\rceil$. For $d\in\{1,2,\dots\}$,
\[
\mathbb{P}\!\left(\lceil X\rceil=d\right)
=\mathbb{P}(d-1<X\le d)
=F_X(d)-F_X(d-1).
\]
we get
\begin{align*}
\mathbb{P}\!\left(\tau^{(1)}=d\right)
&=\pi\left(e^{-(d-1)/\sigma_1}-e^{-d/\sigma_1}\right)
+(1-\pi)\left(e^{-(d-1)/\sigma_2}-e^{-d/\sigma_2}\right)\\
&=\pi\left(1-e^{-1/\sigma_1}\right)e^{-(d-1)/\sigma_1}
+(1-\pi)\left(1-e^{-1/\sigma_2}\right)e^{-(d-1)/\sigma_2}.
\end{align*}
Equivalently, setting $p_i=1-e^{-1/\sigma_i}$ for $i\in\{1,2\}$,
\[
\mathbb{P}\!\left(\tau^{(1)}=d\right)
=\pi\,p_1(1-p_1)^{d-1}+(1-\pi)\,p_2(1-p_2)^{d-1},
\]
which is exactly the expression specified in \eqref{eq:mixgeom_pmf_wet}.

\subsection{Simulation-based Q-Q plots with parametric bootstrap envelopes}
\label{subsec:simulation_based_qqplots_explanation}

Let $(\tau_i^{(r)})_{1\le i\le n}$ denote the recorded spell durations in state $r\in\{0,1\}$, and let $F_{\hat\theta}^{(r)}$ be the fitted parametric distribution for that state. To assess goodness-of-fit, we compare the recorded order statistics with the distribution of order statistics expected under the fitted model. Write $\tau_{(1)}^{(r)}\le \cdots \le \tau_{(n)}^{(r)}$ for the sorted recorded sample. For each bootstrap replicate $b=1,\dots,B$, we generate an i.i.d.\ sample of size $n$ from the fitted model, with parameters held fixed at $\hat\theta$,
$$(\tilde\tau_k^{*(b),(r)})_{1\le k\le n}\sim F_{\hat\theta}^{(r)},$$
and sort it as
$\tilde\tau_{(1)}^{*(b),(r)}\le \cdots \le \tilde\tau_{(n)}^{*(b),(r)}.$ For each rank $k=1,\dots,n$, we then compute the empirical $\alpha/2$ and $1-\alpha/2$ quantiles across bootstrap replicates,
\[
\ell_k^{(r)}=
Q_{\alpha/2}\!\Bigl((\tilde\tau_{(k)}^{*(b),(r)})_{1\leq b \leq B}\Bigr),
\qquad
u_k^{(r)}=
Q_{1-\alpha/2}\!\Bigl((\tilde\tau_{(k)}^{*(b),(r)})_{1\leq b \leq B}\Bigr),
\]
where $Q_p(\cdot)$ denotes the empirical $p$-quantile of a finite sample. Then, $[\ell_k^{(r)},u_k^{(r)}]$ is a pointwise $(1-\alpha)$ bootstrap
envelope for the $k$th order statistic under the fitted model.

In the Q-Q display, the points $(\tau_{(k)}^{(r)},\{\tilde\tau_{(k)}^{*(B),(r)}\})_{1\leq k \leq n}$ are plotted along with the bootstrap envelope.  The area of the points for each pair is proportional to the square root of the number of identical pairs.

\subsection{Goodness-of-fit test on simulated dataset}\label{subsec:gof_simulated_data}

To assess the finite-sample behaviour of the goodness-of-fit statistic in proposition~\ref{prop:gof_q}, we performed a simulation study. Spell durations were simulated from the fitted eGPD distribution, using the estimated
parameters $(\hat f_1,\hat \xi,\hat \sigma,\hat \kappa)$ on the ECAD dataset (Section~\ref{subsec:param_estimation_ecad}). For each simulated sample, we computed the empirical exit probabilities
$\widehat{\mathbf q}^{(r)}$ and evaluated the statistic
\[
Q_{N_n,\; \hat \theta} = N_n\,\mathbf{\Delta}_{\hat \theta}^{\mathsf T}
\left(\mathbf{T}_{\hat \theta}\Sigma_{\hat \theta}
\mathbf{T}_{\hat \theta}^{\mathsf T}\right)^{-1}\mathbf{\Delta}_{\hat \theta},
\]
where $\mathbf{\Delta}$ is the difference between empirical and true probabilities. In the experiment we set $d_{\max}=10$, so that the asymptotic distribution under $H_0^{(r)}$ is $\chi^2_{9}$. Fig.~\ref{fig:gof_simulation} displays (top panel) the histogram of the simulated values of the statistic $Q_n$ (grey bars) together with the theoretical $\chi^2_{9}$ density (black dashed curve), and (bottom panel) the histogram of the corresponding $p$-values together with the density of the standard Uniform distribution (black dashed line). Under the null hypothesis, the $p$-values should be approximately uniformly distributed. The simulation confirms that both distributions are well reproduced, supporting the validity of the test in finite samples.

\begin{figure}[!htbp]
\centering
\includegraphics[width= 0.75\linewidth]{ 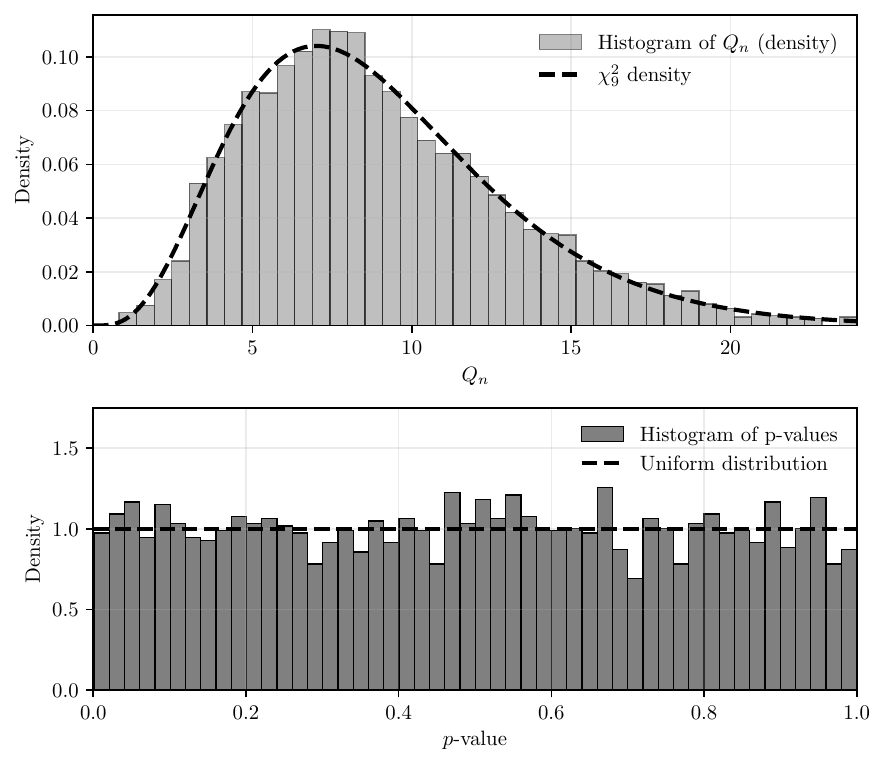}
\caption{Simulation check of the chi-squared goodness-of-fit test described in proposition~\ref{prop:gof_q} on data simulated with a BMCD with hdeGPD distributed spells. Top: histogram (density) of the simulated test statistic $Q_n$ compared with the theoretical $\chi^2_{9}$ density. Bottom: histogram of the associated $p$-values with the Uniform$(0,1)$ density shown as a dashed line.}
\label{fig:gof_simulation}
\end{figure}

\subsection{Mean residual duration estimation}\label{subsec:expected_value_long_dry_spells}

The mean residual duration after \(d\) dry days is defined in equation \eqref{eq:mean_residual_dry_spell_duration}. Since \(\tau^{(0)}\) is a positive discrete random variable,
\begin{equation}\label{eq:expected_val_residual_over_tau_0}
\begin{aligned}
\mathbb{E}\bigl[\tau^{(0)}-d \mid \tau^{(0)}>d\bigr]
&=
\frac{\mathbb{E}\big[(\tau^{(0)}-d)^+\big]}{\overline F_{\tau^{(0)}}(d)}\\
&=
\frac{
\sum_{k=1}^{\infty}\overline F_{\tau^{(0)}}(d+k-1)
}{
\overline F_{\tau^{(0)}}(d)
}\\
&=
\frac{
\mathbb{E}[\tau^{(0)}]-\sum_{k=1}^{d}\overline F_{\tau^{(0)}}(k-1)
}{
\overline F_{\tau^{(0)}}(d)
}.
\end{aligned}
\end{equation}

First let us have an estimation of this quantity for the model specification chosen in this article in Section \ref{subsec:spell_duration_distribution_specification}. Let $X$ be a type-1 eGPD random variable. We have, by construction of $\tau^{(0)}$ probability mass function \eqref{eq:tau0_pmf}:
\begin{equation}\label{eq:link_tau_egpd}
\mathbb E[\tau^{(0)}] = 1+(1-f_1)\,\mathbb E[\lceil X\rceil].
\end{equation}
Since $X$ is nonnegative and continuous, for any integer $u\ge 0$,
\[
\mathbb E[\lceil X\rceil]=\sum_{m=0}^\infty \overline F_X(m)=
\sum_{m=0}^{u-1}\overline F_X(m)+\sum_{m=u}^\infty \overline F_X(m).
\]
For $\xi < 0$, we can have an exact computation because the sum has a finite number of terms, as $\overline F_X (m) =0,$ for $m \ge u_{\lim} \coloneqq -\frac{\sigma}{\xi}$. For $\xi > 0$ we compute bounds. As $\overline F_X$ is nonincreasing, we have
\[
\int_u^\infty \overline F_X(x)\,dx
\le
\sum_{m=u}^\infty \overline F_X(m)
\le
\overline F_X(u)+\int_u^\infty \overline F_X(x)\,dx.
\]
Therefore,
\[
L_u
\le
\mathbb E[\lceil X\rceil]
\le
L_u+\overline F_X(u),
\]
where
\[
L_u:=\sum_{m=0}^{u-1}\overline F_X(m)+\int_u^\infty \overline F_X(x)\,dx.
\]
Combining this with \eqref{eq:expected_val_residual_over_tau_0} and \eqref{eq:link_tau_egpd}, we get
\begin{equation}\label{eq:bounds_value_long_dry_spells_refined}
\begin{gathered}
\frac{
1+(1-f_1)L_u
-\sum_{k=1}^{d}\overline F_{\tau^{(0)}}(k-1)
}{
\overline F_{\tau^{(0)}}(d)
}
\\
\le
\mathbb{E}\bigl[\tau^{(0)}-d \mid \tau^{(0)}>d\bigr]
\\
\le
\frac{
1+(1-f_1)(L_u+\overline F_X(u))
-\sum_{k=1}^{d}\overline F_{\tau^{(0)}}(k-1)
}{
\overline F_{\tau^{(0)}}(d)
}.
\end{gathered}
\end{equation}
As $u \to +\infty$ the width between those bounds converges to $0$ so we can get an approximation of arbitrary precision. These bounds are finite whenever $\xi<1$, which is exactly the condition ensuring that the tail integral in \eqref{eq:int_survival_egpd1_from_p} is finite.

For the purpose of comparison, let us consider a two-state first-order Markov chain
baseline, so that the durations of dry spell are geometrically distributed. Then, using \eqref{eq:expected_val_residual_over_tau_0}, we obtain
\begin{equation}\label{eq:mean_excess_duration_dry_spells_markov_case}
\mathbb{E}\bigl[\tau^{(0)}-d \mid \tau^{(0)}>d\bigr]
=
\frac{
\frac{1}{p_{\mathrm{geom,dry}}}-\frac{1-(1-p_{\mathrm{geom,dry}})^d}{p_{\mathrm{geom,dry}}}
}{
(1-p_{\mathrm{geom,dry}})^d
}
=
\frac{1}{p_{\mathrm{geom,dry}}}.
\end{equation}
The final equality follows directly from the memoryless property of the geometric distribution.

\subsection{Proportion of time in severe dry spell}\label{subsec:approx_proportion_time_long_dry_spell}

The bounds \eqref{eq:bounds_value_long_dry_spells_refined} of Section~\ref{subsec:expected_value_long_dry_spells} are very close to the bounds for the proportion of time in long dry spell described in Example~\ref{ex:long_dry_spell}. One can thus approximate the latter using the bounds:
\begin{equation}\label{eq:final_bounds_proportion_long_dry_spells_refined}
\begin{gathered}
\frac{
1+(1-f_1)L_u-\sum_{k=1}^{d}\overline F_{\tau^{(0)}}(k-1)
}{
1+(1-f_1)(L_u+\overline F_X(u))+\mathbb E[\tau^{(1)}]
}
\\[2mm]
\le
\frac{\mathbb{E}\big[(\tau^{(0)}-d)^+\big]}{\mathbb E[\tau]}
\\[2mm]
\le
\frac{
1+(1-f_1)(L_u+\overline F_X(u))-\sum_{k=1}^{d}\overline F_{\tau^{(0)}}(k-1)
}{
1+(1-f_1)L_u+\mathbb E[\tau^{(1)}]
},
\end{gathered}
\end{equation}
Using all the notations as introduced in Section~\ref{subsec:expected_value_long_dry_spells}, and in the case of the model specification of Section~\ref{subsec:spell_duration_distribution_specification}. As $u \to +\infty$ the width between those bounds converges to $0$ so we can get an approximation of arbitrary precision.

\subsection{Truncated type-1 eGPD expectation}\label{subsec:type1egpd}
Let us start by expressing the first order moment of a continuous type-1 eGPD distributed random variable. Let $X$ be a random variable having an eGPD with parameters $(\kappa,\sigma,\xi)$. When $\xi \neq 0$ and $\xi <1$, the order-1 moment is given by
\hide{\[
\mathbb{E}[X]=
\begin{cases}
\dfrac{\sigma}{\xi}\left(\kappa\,\mathrm{b}(\kappa,1-\xi)-1\right),
& \xi<1,\ \xi\neq 0,\\[8pt]
\sigma\bigl(\psi(\kappa+1)+\gamma\bigr),
& \xi=0,
\end{cases}
\]}

\begin{equation}\label{eq:expected_value_egpd1}
    \mathbb{E}[X]=\dfrac{\sigma}{\xi}\left(\kappa\,\mathrm{b}(\kappa,1-\xi)-1\right)
\end{equation}
where
\[
\mathrm{b}(b_1,b_2)=\int_0^1 t^{b_1-1}(1-t)^{b_2-1}\,dt
\hide{=\frac{\Gamma(b_1)\Gamma(b_2)}{\Gamma(b_1+b_2)}}
\]
denotes the beta function\hide{, $\psi$ is the digamma function, and $\gamma$ is the Euler constant}. This first moment exists if and only if $\xi<1.$ For readability, calculation details are at the end of this Section.

Below are the calculation details for equation:~\eqref{eq:expected_value_egpd1}. Define just for this paragraph a random variable $U\sim \mathrm{Unif}(0,1)$. One may write $X=\frac{\sigma}{\xi}\left[(1-U^{1/\kappa})^{-\xi}-1\right].$
Taking expectations yields
\[
\mathbb{E}[X]
=
\frac{\sigma}{\xi}
\left(
\mathbb{E}\big[(1-U^{1/\kappa})^{-\xi}\big]-1
\right).
\]
Now set $W=U^{1/\kappa}.$ Then $W$ has density $f_W(w)=\kappa w^{\kappa-1},\qquad 0<w<1.$
Hence
\[
\mathbb{E}\big[(1-W)^{-\xi}\big]
=
\kappa\int_0^1 w^{\kappa-1}(1-w)^{-\xi}\,dw.
\]
Recognizing the beta integral $ \mathrm{b}(\kappa,1-\xi) \coloneqq \int_0^1 w^{\kappa-1}(1-w)^{-\xi}\,dw$,
we obtain \eqref{eq:expected_value_egpd1}.
\hide{Using $\mathrm{b}(a,b)=\frac{\Gamma(a)\Gamma(b)}{\Gamma(a+b)},$this may also be written as:
\[
\mathbb{E}[X]
=
\frac{\sigma}{\xi}\left(
\frac{\Gamma(\kappa+1)\Gamma(1-\xi)}{\Gamma(\kappa+1-\xi)}-1
\right).
\]}
For this quantity to be finite, one needs $\xi<1$.
\hide{
In the case $\xi=0$, the distribution reduces to
\[
F_{\kappa,\sigma,0}(x)=\bigl(1-e^{-x/\sigma}\bigr)^\kappa,
\qquad x\ge 0,
\]
and the first moment is obtained by taking the limit as $\xi\to 0$, yielding
\[
\mathbb{E}[X]
=
\sigma\bigl(\psi(\kappa+1)+\gamma\bigr).
\]}

Now let us consider the slightly different quantity

\[
\int_u^\infty \overline F(x)\,dx=\mathbb E[(X-u)_+].
\]
Using the same representation $X=\frac{\sigma}{\xi}\left((1-U^{1/\kappa})^{-\xi}-1\right),
$ with $ U\sim\mathrm{Unif}(0,1)$, and we still use $W=U^{1/\kappa}$, so that $f_W(w)=\kappa w^{\kappa-1}$ on $(0,1)$, one gets
\[
\int_u^\infty \overline F(x)\,dx
=
\mathbb E\!\left[\left(\frac{\sigma}{\xi}\big((1-W)^{-\xi}-1\big)-u\right)_+\right].
\]
We have $X>u$ is equivalent to $W>a_u$, where $a_u:=1-\left(1+\frac{\xi u}{\sigma}\right)^{-1/\xi}$.
Hence
\[
\int_u^\infty \overline F(x)\,dx
=
\int_{a_u}^{1}
\left[
\frac{\sigma}{\xi}\big((1-w)^{-\xi}-1\big)-u
\right]\kappa w^{\kappa-1}\,dw.
\]
Splitting the integral gives
\[
\int_u^\infty \overline F(x)\,dx
=
\frac{\sigma\kappa}{\xi}\int_{a_u}^{1} w^{\kappa-1}(1-w)^{-\xi}\,dw
-
\left(\frac{\sigma}{\xi}+u\right)\int_{a_u}^{1}\kappa w^{\kappa-1}\,dw.
\]
Denoting $b_x$ the incomplete beta function, defined as
\[
b_x(b_1,b_2)\coloneqq\int_0^x t^{b_1-1}(1-t)^{b_2-1}\,dt,
\]
one has
\[
\int_{a_u}^{1} w^{\kappa-1}(1-w)^{-\xi}\,dw
=
\mathrm{b}(\kappa,1-\xi)-b_{a_u}(\kappa,1-\xi),
\]
while
\[
\int_{a_u}^{1}\kappa w^{\kappa-1}\,dw
=
1-a_u^\kappa.
\]
Therefore
\begin{equation}\label{eq:int_survival_egpd1_from_p}
\int_u^\infty \overline F(x)\,dx
=
\frac{\sigma\kappa}{\xi}\Bigl(\mathrm{b}(\kappa,1-\xi)-b_{a_u}(\kappa,1-\xi)\Bigr)
-\left(\frac{\sigma}{\xi}+u\right)\bigl(1-a_u^\kappa\bigr)
\end{equation}
with
\[
b_x(b_1,b_2)=\int_0^x t^{b_1-1}(1-t)^{b_2-1}\,dt,
\qquad
a_u=1-\left(1+\frac{\xi u}{\sigma}\right)^{-1/\xi}.
\]
The incomplete beta function is implemented in many packages. Just as earlier, this quantity is finite only when $\xi<1$.

\end{appendices}

\end{document}